\documentclass[11pt]{article}
\pdfoutput=1
\usepackage[T1]{fontenc}
\usepackage{lmodern}
\usepackage{comment}
\usepackage{slantsc}
\usepackage[protrusion=true,expansion=true]{microtype}
\usepackage{breakcites}
\usepackage{amsmath,amssymb,amsfonts,amsthm, mathtools}
\usepackage{subcaption}
\usepackage{graphicx}
\usepackage{fullpage}
\usepackage{setspace}
\usepackage[backref=page]{hyperref}
\usepackage[nameinlink]{cleveref}
\usepackage{color}
\usepackage{wrapfig}
\usepackage{tikz}
\usetikzlibrary{decorations.pathreplacing,patterns}
\usepackage{algorithm}
\usepackage[noend]{algpseudocode}
\usepackage[framemethod=tikz]{mdframed}
\usepackage{xspace}
\usepackage{pgfplots}
\usepackage{framed}
\usepackage{thmtools}
\usepackage{thm-restate}
\usepackage{tabu}
\usepackage{fancyhdr}
\usepackage{placeins}
\usepackage{float}
\pgfplotsset{compat=1.5}

\newtheorem{theorem}{Theorem}[section]

\newtheorem{lemma}[theorem]{Lemma}

\newtheorem{definition}[theorem]{Definition}
\newtheorem{remark}[theorem]{Remark}

\newtheorem{invariant}[theorem]{Invariant}

\newtheorem{conjecture}[theorem]{Conjecture}

\newenvironment{proofof}[1]{\begin{trivlist} \item {\bf Proof
#1:~~}}
  {\qed\end{trivlist}}

\newcommand{\namedref}[2]{\hyperref[#2]{#1~\ref*{#2}}}

\newcommand{\lemlab}[1]{\label{lem:#1}}
\newcommand{\lemref}[1]{\namedref{Lemma}{lem:#1}}

\newcommand{\deflab}[1]{\label{def:#1}}

\def \EstLevel    {\mdef{\mathsf{EstLevel}}}

\def \AMS    {\mdef{\mathsf{AMS}}}

\def \MaintainIter    {\mdef{\mathsf{MaintainIter}}}

\newcommand\norm[1]{\left\lVert#1\right\rVert}
\newcommand{\PPr}[1]{\ensuremath{\mathbf{Pr}\left[#1\right]}}

\renewcommand{\O}[1]{\ensuremath{\mathcal{O}\left(#1\right)}}
\newcommand{\tO}[1]{\ensuremath{\tilde{\mathcal{O}}\left(#1\right)}}
\newcommand{\eps}{\varepsilon}

\def \calA    {\mdef{\mathcal{A}}}

\def \calF    {\mdef{\mathcal{F}}}

\def \calP    {\mdef{\mathcal{P}}}

\def \bA    {\mdef{\mathbf{A}}}
\def \bB    {\mdef{\mathbf{B}}}

\def \be    {\mdef{\mathbf{e}}}

\def \bq    {\mdef{\mathbf{q}}}

\def \bu    {\mdef{\mathbf{u}}}

\def \bv    {\mdef{\mathbf{v}}}
\def \bx    {\mdef{\mathbf{x}}}
\def \by    {\mdef{\mathbf{y}}}
\def \bz    {\mdef{\mathbf{z}}}

\newcommand{\mdef}[1]{{\ensuremath{#1}}\xspace}  

\DeclareMathOperator*{\polylog}{polylog}
\DeclareMathOperator*{\poly}{poly}

\newcommand{\ignore}[1]{}

\newif\ifnotes\notestrue 
\ifnotes
\newcommand{\samson}[1]{\textcolor{red}{{\bf (Samson:} {#1}{\bf ) }} \marginpar{\tiny\bf
             \begin{minipage}[t]{0.5in}
               \raggedright S:
            \end{minipage}}}

\newcommand{\david}[1]{\textcolor{purple}{{\bf (David:} {#1}{\bf ) }} \marginpar{\tiny\bf
             \begin{minipage}[t]{0.5in}
               \raggedright D:
            \end{minipage}}}

\else
\newcommand{\samson}[1]{}
\newcommand{\david}[1]{}
\fi

\makeatletter
\renewcommand*{\@fnsymbol}[1]{\textcolor{mahogany}{\ensuremath{\ifcase#1\or *\or \dagger\or \ddagger\or
 \mathsection\or \triangledown\or \mathparagraph\or \|\or **\or \dagger\dagger
   \or \ddagger\ddagger \else\@ctrerr\fi}}}
\makeatother

\providecommand{\email}[1]{\href{mailto:#1}{\nolinkurl{#1}\xspace}}

\definecolor{mahogany}{rgb}{0.75, 0.25, 0.0}
\definecolor{darkblue}{rgb}{0.0, 0.0, 0.55}
\definecolor{darkpastelgreen}{rgb}{0.01, 0.75, 0.24}
\definecolor{darkgreen}{rgb}{0.0, 0.2, 0.13}
\definecolor{darkgoldenrod}{rgb}{0.72, 0.53, 0.04}
\definecolor{darkred}{rgb}{0.55, 0.0, 0.0}
\definecolor{forestgreenweb}{rgb}{0.13, 0.55, 0.13}
\definecolor{greencss}{rgb}{0.0, 0.5, 0.0}
\definecolor{bleudefrance}{rgb}{0.19, 0.55, 0.91}
\definecolor{darkpastelpurple}{rgb}{0.59, 0.44, 0.84}
\definecolor{darkcerulean}{rgb}{0.03, 0.27, 0.49}

\hypersetup{
     colorlinks   = true,
     citecolor    = mahogany,
	 linkcolor	  = forestgreenweb
}

\fancypagestyle{pg}
{
\lhead{}
\rhead{}
\cfoot{--\ \thepage\ --}

}

\AtBeginDocument{%
  \DeclareFontShape{T1}{lmr}{m}{scit}{<->ssub*lmr/m/scsl}{}%
}
\begin{document}

\allowdisplaybreaks

\title{Adversarial Robustness on Insertion-Deletion Streams}
\author{Elena Gribelyuk \\ Princeton University \\ \email{eg5539@princeton.edu}  
\and 
Honghao Lin \\ Carnegie Mellon University \\ \email{honghaol@andrew.cmu.edu} 
\and 
David P. Woodruff \\ Carnegie Mellon University \\ \email{dwoodruf@andrew.cmu.edu}
\and
Huacheng Yu \\ Princeton University \\ \email{hy2@cs.princeton.edu}
\and
Samson Zhou \\ Texas A\&M University \\ \email{samsonzhou@gmail.com}}
\date{\today}

\maketitle
\pagestyle{pg}

\begin{abstract}
We study adversarially robust algorithms for insertion-deletion (turnstile) streams, where future updates may depend on past algorithm outputs. While robust algorithms exist for insertion-only streams with only a polylogarithmic overhead in memory over non-robust algorithms, it was widely conjectured that turnstile streams of length polynomial in the universe size $n$ require space linear in $n$. We refute this conjecture, showing that robustness can be achieved using space which is significantly sublinear in $n$. Our framework combines multiple linear sketches in a novel estimator-corrector-learner framework, yielding the first insertion-deletion algorithms that approximate: (1) the second moment $F_2$ up to a $(1+\varepsilon)$-factor in polylogarithmic space, (2) any symmetric function $\cal{F}$ with an $\mathcal{O}(1)$-approximate triangle inequality up to a $2^{\mathcal{O}(C)}$ factor in $\tilde{\mathcal{O}}(n^{1/C}) \cdot S(n)$ bits of space, where $S$ is the space required to approximate $\cal{F}$ non-robustly; this includes a broad class of functions such as the $L_1$-norm, the support size $F_0$, and non-normed losses such as the $M$-estimators, and (3) $L_2$ heavy hitters. For the $F_2$ moment, our algorithm is optimal up to $\textrm{poly}((\log n)/\varepsilon)$ factors. Given the recent results of Gribelyuk et al. (STOC, 2025), this shows an exponential separation between linear sketches and non-linear sketches for achieving adversarial robustness in turnstile streams.  
\end{abstract}


\section{Introduction}
In the classical streaming model, updates arrive sequentially and define an implicit dataset, and the goal is to approximate or compute a specific statistic using space sublinear in both the stream length $m$ and the dataset dimension $n$.
Typically, the algorithm is expected to process the data in a single pass, maintaining a compact summary that enables efficient estimation despite stringent memory constraints. 
This model effectively captures the computational limitations encountered in many large-scale applications, where the data size far exceeds available storage. 
As a result, the streaming model has attracted significant attention in both theoretical and applied research, leading to a rich body of work exploring algorithmic techniques, lower bounds, and practical implementations of streaming algorithms. 
Such applications include processing logs from network and traffic monitoring, analyzing financial market transactions, handling sensor data from Internet of Things (IoT) devices, and managing scientific measurement streams.

While classical streaming algorithms typically assume that the input stream is fixed and independent of the internal parameters of the algorithm, many practical settings require a more flexible model, where previous algorithmic outputs may influence future inputs. 
For example, database queries often depend on prior responses, iterative methods like stochastic gradient descent update their internal state based on earlier results, recommendation systems create feedback loops through user interactions, and financial markets respond to aggregated statistics, which in turn shape future input data. 
This juxtaposition between the assumption of an oblivious, fixed input dataset and the adaptive nature of many real-world applications undermines the foundations of classical streaming analysis, which often crucially leverages the independence between the internal randomness of the algorithm and the underlying dataset. 

\paragraph{Adversarial robustness.}
Motivated by this gap, \cite{Ben-EliezerJWY22} introduced the \emph{adversarially robust streaming model} to capture settings in which the sequence of updates may be adaptive or even adversarial. 
This model and its variants has received significant attention, with a growing body of work developing robust algorithms for various tasks~\cite{MironovNS11,MitrovicBNTC17,AvdiukhinMYZ19,Ben-EliezerY20,CherapanamjeriN20,AlonBDMNY21,BeimelKMNSS22,ChakrabartiGS22,MenuhinN22,NaorO22,AssadiCGS23,CherapanamjeriSWZZZ23,CohenNSS23,DinurSWZ23,PengR23,WoodruffZZ23b,AhmadianCohen2024,CohenNSSS24,CohenSS25,Ben-EliezerSO25}. 
Formally, at each time step $t\in[m]$, the streaming algorithm $\calA$ receives an update $u_t = (a_t, \Delta_t)$, where $a_t \in [n]$ is an index and $\Delta_t \in \mathbb{Z}$ denotes an increment or decrement to coordinate $a_t$ of an implicit frequency vector $\bx$, defined by $x_i = \sum_{t: a_t = i} \Delta_t$. 
In this manner, we also define the prefix vector $\bx^{(t)}$ as the frequency vector after the first $t$ updates, i.e., $x^{(t)}_i = \sum_{s \le t: a_s = i} \Delta_s$. 
Throughout, we assume that $m = \poly(n)$ and that $|\Delta_t| \le \poly(n)$ for all $t \in [m]$; by scaling, one could equivalently assume that each $\Delta_t$ is a multiple of $\frac{1}{\poly(n)}$.
We say the algorithm $\calA$ is \emph{adversarially robust} for a specified function $f:\mathbb{Z}^n \to \mathbb{R}$ if $\calA$ provides accurate estimates of $f(\bx^{(t)})$ at each time step $t\in[m]$, even when the updates are chosen adaptively and may depend on the entire transcript of previous updates and algorithm responses.

\begin{definition}
\cite{Ben-EliezerJWY22}
Let $f: \mathbb{Z}^n \to\mathbb{R}$ be a fixed function. 
A streaming algorithm $\calA$ is adversarially robust if, for any $\eps,\delta\in(0,1)$ and $m=\poly(n)$, at each time step $t \in [m]$, the algorithm outputs an estimate $Z_t$ such that
\[\PPr{|Z_t - f(\bx^{(t)})|\le\eps\cdot f(\bx^{(t)})}\ge1-\delta.\]
\end{definition}
Conceptually, the model can be viewed as a sequential two-player game between the randomized streaming algorithm $\calA$ and an adaptive adversary, which constructs a sequence of updates $\{u_1, \ldots, u_m\}$, where each $u_t$ may depend on the entire history of past updates and algorithm outputs. 
Then, the game unfolds as follows: 
\begin{enumerate}
\item In each round $t$, the adversary selects an update $u_t$ based on the previous updates $u_1, \ldots, u_{t-1}$ and the corresponding outputs $Z_1, \ldots, Z_{t-1}$ of the algorithm.
\item 
The algorithm $\calA$ processes $u_t$ and updates its internal state.
\item
The algorithm $\calA$ then returns an estimate $Z_t$ of $f(\bx^{(t)})$ before proceeding to the next round.
\end{enumerate}
By nature of the game, the algorithm is only permitted a single pass over the data stream. 

\paragraph{Insertion-only streams.}
In the insertion-only streaming model, each update $\Delta_t>0$ increases some coordinate of the underlying frequency vector. 
This setting has received significant attention recently, and it is now known that many fundamental streaming problems admit adversarially robust algorithms using sublinear space \cite{HassidimKMMS20,BravermanHMSSZ21,KaplanMNS21,WoodruffZ21,AjtaiBJSSWZ22,Ben-EliezerJWY22,AttiasCSS23,JiangPW23}. 
The techniques range from importance sampling and merge-and-reduce~\cite{BravermanHMSSZ21,JiangPW23}, to switching between various sketches or bounding the total number of computation paths~\cite{Ben-EliezerJWY22}, to differential privacy~\cite{HassidimKMMS20,AttiasCSS23}, to new algorithmic paradigms for estimating the difference of the function values at different times~\cite{WoodruffZ21}. 
Notably, for many problems such as heavy hitters, entropy estimation, clustering, linear regression, subspace embeddings, and graph sparsification, there exist adversarially robust solutions, which compared to the space complexity of classical insertion-only algorithms, only incur a small overhead polylogarithmic in the universe size $n$. 

A particularly well-studied family of problems is that of estimating the frequency moments $F_p(\bx)=\sum_{i\in[n]}|x_i|^p$ for $p>0$ and $F_0(\bx)=|\{i\in[n]\,\mid\,x_i\neq 0\}|$. 
These moments capture various important statistics of the data stream. 
For example, $F_1$ corresponds to the total number of inserted items, $F_2$ relates to the repeat rate or collision probability, and $F_0$ is the number of distinct elements, i.e., the support size of the stream. 
Moreover, the $L_p$ norms satisfy $\|\bx\|_p=(F_p(\bx))^{1/p}$, so that algorithms for estimating the $F_p$ moments directly yield algorithms for estimating $L_p$ norms, and vice versa.
Since the seminal work of \cite{AlonMS99} which formalized the classical streaming model and introduced the frequency moment estimation problem, this family of problems has been studied extensively~\cite{Indyk06,IndykW05,Li08,KaneNW10a,KaneNW10b,AndoniKO11,Ganguly11,Ganguly12,BravermanKSV14,BravermanVWY18,BravermanGLWZ18,GangulyW18,WoodruffZ21b,JayaramWZ24,BravermanZ25,FengSW25}. More recently, there has been significant work on designing robust algorithms for frequency moments~\cite{HassidimKMMS20,WoodruffZ21,Ben-EliezerEO22,Ben-EliezerJWY22,WoodruffZ24}, and indeed it has been shown~\cite{WoodruffZ21} that achieving a $(1+\eps)$-multiplicative approximation for any input accuracy parameter $\eps\in(0,1)$ incurs at most $\polylog\left(\frac{1}{\eps}\right)$ space overhead in an insertion-only stream.
This makes the $F_p$ moment estimation problem a compelling case study for understanding the limits of robustness in the streaming model. 

\paragraph{Insertion-deletion streams.}  
Compared to insertion-only streams, significantly less is understood about the space complexity of adversarially robust algorithms in turnstile streams, where updates $\Delta_t$ may increase or decrease coordinates.
\cite{Ben-EliezerJWY22} showed that when deletions can only decrease the underlying statistic by at most some bounded factor, there exists an adversarially robust algorithm with space sublinear in the stream length $m$; however, this assumption may not hold in general. 
By utilizing techniques in differential privacy, \cite{HassidimKMMS20,Ben-EliezerEO22,WoodruffZ24} designed adversarially robust algorithms with space sublinear in the stream length $m$; however, when $m$ grows polynomially in the dimension $n$, the resulting space complexity is no longer sublinear in $n$.

On the negative side, a recent line of work~\cite{HardtW13,CohenNSSS24,GribelyukLWYZ24,GribelyukLWYZ25} has developed adaptive attacks against linear sketches, which represent a popular algorithmic tool for designing low-space streaming algorithms. 
Given a stream of updates to a vector $\bx\in \mathbb{R}^n$, a linear sketch compresses the data into a much smaller vector $\bA\bx \in \mathbb{R}^r$, where $r \ll n$, using a carefully chosen matrix $\bA \in \mathbb{R}^{r \times n}$ that is efficient to store and apply. 
The matrix should be designed so that there exists a post-processing function $g$, so that $g(\bA\bx)$ can be used to estimate the desired quantity about $\bx$, e.g., $g(\bA\bx)\approx\|\bx\|_p$. 
Since $\bA\bx$ only has $r$ entries, the algorithm often only needs to store $\O{r\log n}$ bits throughout the stream, and for many interesting problems, it is possible to achieve $r\ll n$. 
In fact, all known algorithms for turnstile streams of sufficiently large length are based on linear sketches. 
Moreover, under certain conditions, it has been shown that optimal algorithms for turnstile streaming can be characterized entirely by linear sketching methods~\cite{LiNW14,AiHLW16,HosseiniLY19,KallaugherP20}.
Unfortunately, results by \cite{HardtW13,GribelyukLWYZ24,GribelyukLWYZ25} prove that any adversarially robust insertion-deletion streaming algorithm that uses a linear sketch to achieve any constant multiplicative approximation to the $F_p$ moments requires sketching dimension $r=\Omega(n)$. 
Although it is not clear whether \emph{adversarially robust} turnstile streaming algorithms can be completely characterized by linear sketching methods, a long-standing open conjecture~\cite{stoc2021workshop,focs2023workshop,encore2025workshop} has been:
\begin{mdframed}
\vspace{-0.12in}
\begin{conjecture}
\label{conj:main}
Any adversarially robust constant-factor approximation algorithm for $F_p$ moments on insertion-deletion streams of sufficient length $m=\poly(n)$ requires $\Omega(n)$ space.
\end{conjecture}
\end{mdframed}

\subsection{Our Contributions}
In this work, we show that \Cref{conj:main} is false. 
Namely, we circumvent the above $\Omega(n)$ lower bound for linear sketches and design sublinear-space streaming algorithms that output a $(1+\eps)$-approximations to a wide variety of functions in an adversarial insertion-deletion stream. 
Crucially, our algorithms are {\it non-linear}, which is necessary to bypass the linear sketch lower bounds for achieving adversarial robustness in insertion-deletion streams in \cite{GribelyukLWYZ25}.

We first present our result for arguably one of the most fundamental problems for streaming algorithms, estimating the $F_2$ moment $\|\bx\|_2^2=x_1^2+\ldots+x_n^2$. 
The $F_2$ moment estimation problem was first studied in the seminal work of \cite{AlonMS99}, which formally defined the streaming model and gave a $\O{\frac{1}{\eps^2} \log n \log \frac{1}{\delta}}$ space algorithm for $F_2$ estimation, which was later shown to be optimal \cite{KaneNW10a,BravermanZ25}. 

\begin{restatable}{theorem}{thmltwo}
\label{thm:ltwo}
Given any $\eps\in(0,1)$, there exists an adversarially robust insertion-deletion streaming algorithm on a stream of length $m$ that with high probability, outputs a $(1+\eps)$-approximation to the $F_2$ moment at all times, for the underlying frequency vector of universe size $n$. 
For $m=\poly(n)$, the algorithm uses $\poly\left(\frac{1}{\eps},\log n\right)$ bits of space. 
\end{restatable}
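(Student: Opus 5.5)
We will build the algorithm from the estimator--corrector--learner template mentioned in the abstract. The guiding idea is that the adversary can only break a linear sketch $\bA\bx$ by learning its kernel. Learning the kernel requires the adversary to observe many informative outputs. We therefore expose only rounded outputs, and we retire or correct a sketch before enough information about it has leaked.

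Concretely, we keep a current ``learned'' vector $\by\in\mathbb{R}^n$ in implicit sparse form. It is a collection of $\poly(\log n/\eps)$ heavy coordinates together with their values, recovered by an $L_2$ heavy-hitter (CountSketch) routine. We then estimate $\|\bx\|_2^2$ as $\|\by\|_2^2 + 2\langle \by,\bx-\by\rangle + \|\bx-\by\|_2^2$. We maintain $\bx-\by$ through linear sketches, because $\bA(\bx-\by)=\bA\bx-\bA\by$ and $\by$ is known explicitly.

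The key steps, in order, are as follows.

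\textbf{(1) Epochs and rounding.} We partition time into epochs. During an epoch the published output is frozen at a value rounded to a power of $(1+\eps)$. It changes only when a fresh AMS sketch, queried at geometrically spaced thresholds, certifies a $(1\pm\eps)$ change. Because outputs are rounded, each epoch reveals $O(\log(n)/\eps)$ bits. Sketches that have never been consulted stay independent of the transcript.

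\textbf{(2) Estimator.} We use a pool of independent AMS sketches with sparse-recovery capability. The estimator raises an alarm when $\|\bx-\by\|_2^2$ has grown to an $\eps$ fraction of the current estimate. Rounding bounds the information leakage; a differential-privacy/median-aggregation argument in the spirit of \cite{HassidimKMMS20} would also work. With these tools, each sketch remains correct against an adaptive adversary for $\poly(\log n/\eps)$ alarms.

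\textbf{(3) Corrector and learner.} When the alarm fires, we run a fresh heavy-hitter sketch on $\bx-\by$. We add the recovered heavy coordinates, with accurate values, into $\by$. This shrinks the residual back below $\eps\|\bx\|_2^2$. Deletions never cause failure, since the residual is always measured relative to the learned $\by$ and not relative to a monotone quantity. The cross term $\langle \by,\bx-\by\rangle$ is estimated from the sketch as an inner product. Its error is at most $\eps\|\by\|\|\bx-\by\|$, which is acceptable.

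\textbf{(4) Potential argument.} We bound the total number of corrections by $\poly(\log n/\eps)$. The potential we plan to use is $\log\|\bx-\by\|_2^2$ relative to the stored scale. Every correction reduces the residual by a constant factor. Every increase of the residual by a $(1+\eps)$ factor costs the adversary an output change. Adversarial ``oscillation'' must also be charged. Here the learner is important: once a coordinate is stored in $\by$, moving mass back and forth on it is tracked exactly. This exactness is what defeats the Hardt--Woodruff style attacks, and it makes the algorithm non-linear.

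The main obstacle is step (4). An adversary may repeatedly create fresh residual mass on new coordinates and then delete it, forcing $\Omega(n)$ corrections. We would try to fix this with a scale-based amortization. The residual's norm is bounded in $[1,\poly(n)]$. We reset $\by$ when $\|\bx\|$ itself changes by a constant factor, which gives only $O(\log n)$ scales per monotone phase. The difficulty is that non-monotone streams can cycle through scales. If amortization fails, the fallback is to replace sketches from the pool. The pool has size $\poly(\log n/\eps)$, each sketch is safe for a bounded number of revealed bits, and a sketch is discarded once it has been used for $\tilde O(1/\eps^2)$ alarms. The total space is then the pool size times the per-sketch cost, $\poly(1/\eps,\log n)$. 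The fallback is complete only if we can show that the number of alarms over the whole stream is polylogarithmic, and this is the claim I expect to require the paper's new idea.
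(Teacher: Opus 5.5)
Your proposal has a genuine gap, and it is the one you flag yourself: nothing bounds the number of alarms over the stream, and the mechanism you chose cannot supply that bound.

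Step (3) already fails. A $k$-sparse $\by$ with $k=\poly(\log n/\eps)$, assembled from heavy-hitter recovery, cannot bring $\|\bx-\by\|_2^2$ below $\eps\|\bx\|_2^2$ when the mass is spread out. If $\bx$ is roughly flat on $N\gg k/\eps$ coordinates, then every $k$-sparse $\by$ leaves $\|\bx-\by\|_2^2\ge(1-k/N)\|\bx\|_2^2$, so the alarm fires forever.

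More fundamentally, your learner chases the \emph{current} vector $\bx$, which the adversary moves at will. Creating and deleting mass on fresh coordinates forces a new alarm each time, so $\log\|\bx-\by\|_2^2$ is not a monotone potential. Amortizing over scales does not help, because $\|\bx\|_2$ can oscillate $\Omega(m)$ times in a turnstile stream. For the same reason, freezing rounded outputs in step (1) is essentially sketch switching, whose cost grows with the unbounded flip number. Your pool fallback is only as good as this missing bound.

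The missing idea is to make the learner target a vector the adversary can no longer change. The paper writes $\bx=\bz+\bq$, where $\bz$ is the frequency vector of all updates before some time $t$ and $\bq$ collects the later updates. It starts a \emph{fresh} sketch at time $t$ that sees only $\bq$. The learner keeps a dense iterate $\bv$, a linear combination of past queries $\bv+\bq$ stored only through its (rounded) sketches. The estimator outputs an estimate of $\|\bz-\bv\|_2^2+\|\bv+\bq\|_2^2$: the first term comes from the old sketch, which changes only when $\bv$ does, and the second comes by recursion on the fresh sketch.

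Progress follows from the identity $\|\bz+\bq\|_2^2=\|\bz-\bv\|_2^2+2\langle\bz-\bv,\bv+\bq\rangle+\|\bv+\bq\|_2^2$. A corrector is a sketch of $\bz+\bq$ whose answer leaks only as a $\bot/\top$ flag. Whenever it detects a relative error of size $\eta=\Theta(\eps/H)$, the identity gives $|\langle\bz-\bv,\bv+\bq\rangle|=\Omega(\eta)\,\|\bz-\bv\|_2\|\bv+\bq\|_2$. The step $\bv\gets\bv+\alpha(\bv+\bq)$ with $\alpha=\pm\Theta(\eta)\,\|\bz-\bv\|_2/\|\bv+\bq\|_2$ then shrinks $\|\bz-\bv\|_2^2$ by a $(1-\Omega(\eta^2))$ factor. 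Because $\bz$ is frozen, this potential is genuinely monotone, so there are only $L=O(\eta^{-2}\log n)$ corrections per block regardless of the adversary.

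Finally, the paper uses a $B$-ary tree of blocks with $B=O(\eta^{-2}\log n)$ and depth $O(\log n)$, opening a new child block whenever the parent's iterate changes. This ensures each sketch faces only $O(BL+B)$ adaptive interactions. A bounded-computation-paths union bound with $\log(1/\delta)=O((BL+B)\log n)$ then yields robustness in $\poly(1/\eps,\log n)$ space.
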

As the lower bound by \cite{BravermanZ25} implies that even in the non-adaptive setting, $F_2$ moment estimation requires $\poly\left(\frac{1}{\eps},\log n\right)$ bits of space, our results in \Cref{thm:ltwo} are optimal up to $\poly\left(\frac{1}{\eps},\log n\right)$ factors. 
Interestingly, while our overall framework is not a linear sketch, it combines multiple linear sketches in careful way to only require each of them to be robust to a small number of adaptive queries. 
Thus, while prior work~\cite{LiNW14,AiHLW16,KannanMSY18,KallaugherP20} showed that optimal turnstile streaming algorithms are necessarily linear sketches, our results imply that this characterization breaks down in the adversarially robust setting: robustness can be achieved through non-linear frameworks that fundamentally go beyond linear sketching.

We next show that our robust frequency moment algorithms can be used to robustly identify the heavy-hitters at each step in an adaptive insertion-deletion stream. 
Recall that for a frequency vector $\bx\in\mathbb{R}^n$ and a heavy-hitter threshold $\eps\in(0,1)$, the goal of the heavy-hitter problem is to recover a ``small'' list containing all coordinates $i\in[n]$ such that $|x_i|\ge\eps\cdot\|\bx\|_2$. 

\begin{restatable}{theorem}{thmhh}
\label{thm:hh}
Given any $\eps \in (0,1)$, there exists an adversarially robust insertion-deletion streaming algorithm on a stream of length $m$ that with high probability, outputs the set of $L_2$ heavy hitters at all times, for the underlying frequency vector of universe size $n$. 
For $m = \poly(n)$, the algorithm uses $\poly\left(\frac{1}{\eps}, \log n \right)$ bits of space.
\end{restatable}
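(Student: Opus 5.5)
We reduce \Cref{thm:hh} to the robust $F_2$ estimator of \Cref{thm:ltwo}, combined with a standard point-query structure that is only queried a bounded number of times between ``resets''. The plan has three steps.

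\textbf{Step 1: maintain the norm.} We run the algorithm of \Cref{thm:ltwo} with a constant (or $\eps$-dependent) accuracy parameter. This gives a robust estimate $\widehat{F}_2^{(t)}$ of $\|\bx^{(t)}\|_2^2$ at every time $t$, using $\poly(\frac{1}{\eps},\log n)$ space.

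\textbf{Step 2: identify candidates via a difference-estimator framework.} We rerun the estimator--corrector idea coordinate-wise, mirroring the structure behind \Cref{thm:ltwo}.
\begin{itemize}
\item We maintain an explicit list $L$ of candidate coordinates together with exact counters for them. Coordinates are only added when certified heavy, so $|L| = \O{1/\eps^2}$.
\item A fresh CountSketch (a linear sketch with $\poly(\frac{1}{\eps},\log n)$ rows) is used to discover new heavy coordinates.
\item Its output is revealed to the adversary only when a new heavy hitter must be added, or when the $F_2$ estimate changes by a $(1+\eps)$ factor (a new ``epoch'').
\item Between such events the reported set is a deterministic function of $L$, the exact counters, and $\widehat{F}_2^{(t)}$. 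Hence the adversary learns nothing about the CountSketch randomness in between.
\end{itemize}

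\textbf{Step 3: bound leakage by counting reveal events.} The key claim is that the number of reveal events is $\poly(\frac{1}{\eps},\log n)$ per ``phase'', so each linear sketch needs to withstand only that many adaptive queries. For this we use the standard bound that a sketch with $k\log n$ independent repetitions survives $k$ adaptive queries: a union bound over the transcript, or the differential-privacy argument via advanced composition.

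A reveal occurs when some coordinate crosses threshold $\eps\|\bx\|_2$. We reuse the same potential argument that controls the number of corrector invocations in \Cref{thm:ltwo}: charge each crossing either to a $(1+\eps)$ change in $\|\bx\|_2$ or to $\Omega(\eps^2\|\bx\|_2^2)$ of $F_2$ mass moving. We then restart sketches (a switching/ring of independent copies) after each budget is exhausted.

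To output exactly the set $\{i : |x_i|\ge\eps\|\bx\|_2\}$, up to the usual $\eps/2$ slack, we test each candidate in $L$ against $\eps\sqrt{\widehat{F}_2^{(t)}}$ using its exact counter. Coordinates that drop below $\eps/4$ are evicted, which keeps $|L|$ bounded.

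\textbf{Main obstacle.} The hard step is Step 3. In turnstile streams the norm can oscillate arbitrarily, so naive epoch counting is not $\poly(\log n)$. My plan is to inherit the epoch-counting and correction mechanism that \Cref{thm:ltwo} already uses to handle oscillating $F_2$, applying the same learner/corrector scheme to the candidate set. If that mechanism bounds the number of fresh sketch copies by $\poly(\frac{1}{\eps},\log n)$, the heavy-hitter structure fits in the same space budget.
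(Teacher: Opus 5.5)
\textbf{There is a genuine gap, and it is in Step 3, exactly where you flagged it.} Your charging argument does not bound the number of reveal events. In a turnstile stream the adversary can repeatedly insert and delete a single coordinate of magnitude $\poly(n)$. Each such round changes $\|\bx\|_2$ by a large factor and pushes every other coordinate across the threshold $\eps\|\bx\|_2$ and back. So both quantities you charge to can occur $\Theta(m)$ times: $(1+\eps)$-changes of the norm, and $\Omega(\eps^2\|\bx\|_2^2)$ of moved $F_2$ mass. Each CountSketch would then have to survive $\Theta(m)$ adaptive reveals, or the switching ring would need $\Theta(m)$ copies.

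Restarting fresh copies does not help in turnstile streams, because a sketch initialized mid-stream never sees the prefix. Saying you will ``inherit'' the mechanism of \Cref{thm:ltwo} is not an argument. That mechanism does not count epochs at all; it bounds the number of learner updates through the potential $\|\bz-\bv\|_2^2$. You have not specified what the learner, its iterate, or its potential would be for a candidate set.

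There is also a secondary problem: the ``exact counters'' are not exact. A coordinate added at time $t_0$ gets its initial value from a point query whose additive error is proportional to $\|\bx^{(t_0)}\|_2$. After later deletions this error can be arbitrarily large relative to $\|\bx^{(t)}\|_2$, so both your membership test and your eviction rule can be wrong.

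\textbf{The missing idea is that no new leakage accounting is needed.} \Cref{thm:ltwo} is robust against an \emph{arbitrary} adaptive adversary. The heavy-hitter procedure may therefore feed its own probe updates into the robust $F_2$ algorithm and simply be regarded as part of the adversary. Concretely:
\begin{itemize}
\item[(i)] Run \Cref{thm:ltwo} with accuracy $1+\O{\eps^2}$ to obtain $X\approx\|\bx\|_2$.
\item[(ii)] For each $i\in[n]$, insert $\pm\frac{\eps}{2}X\be_i$, read off the robust estimate, and then delete the probe.
\end{itemize}
Since $\|\bx+\frac{\eps}{2}X\be_i\|_2^2-\|\bx\|_2^2=\frac{\eps^2}{4}X^2+\eps X x_i$, the two cases separate:
\begin{itemize}
\item[(a)] If $x_i\ge\eps\|\bx\|_2$, the increase is about $\frac{5}{4}\eps^2X^2$ or more. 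The negative probe handles $x_i\le-\eps\|\bx\|_2$ symmetrically.
\item[(b)] If $|x_i|\le\frac{\eps}{2}\|\bx\|_2$, both probes give an increase of at most about $\frac{3}{4}\eps^2X^2$.
\end{itemize}
This is a constant gap in units of $\eps^2X^2$, which a $(1+\O{\eps^2})$-approximation resolves (\Cref{lem:heavy:hitters}). The probes are deterministic functions of earlier outputs, and the combined stream has length $\O{nm}=\poly(n)$. Robustness and the $\poly(\frac{1}{\eps},\log n)$ space bound then follow directly from \Cref{thm:ltwo}. This is the paper's proof; it costs $\O{n}$ time per update, which the statement does not constrain.
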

Since an algorithm for $L_2$ heavy hitters can be used to recover $L_p$ heavy hitters for $p \leq 2$, we obtain an adversarially robust algorithm for $L_p$ heavy hitters for all $p \leq 2$. 
Again, we remark that the best-known $L_2$ heavy-hitter algorithms use $\poly\left(\frac{1}{\eps},\log n\right)$ space for $p = 2$ and such dependencies are known to be necessary~\cite{Ganguly12,WoodruffZ21b}. 
Thus, our result essentially matches the optimal space bounds of classical non-adaptive $L_2$ heavy-hitter algorithms, up to polynomial factors in $\frac{1}{\eps}$ and logarithmic factors in $n$. 

More generally, we show that our approach applies broadly to any function that satisfies an approximate triangle inequality and admits a non-adaptive sketching algorithm.
Formally, we say that for a constant $\beta>0$, a symmetric function $\calF$ satisfies a $\beta$-approximate triangle inequality if $\calF(\bx-\bz)\le\beta\cdot(\calF(\bx-\by)+\calF(\by-\bz))$ for all vectors $\bx,\by,\bz\in\mathbb{R}^n$. 
\begin{restatable}{theorem}{thmtri}
\label{thm:tri}
Suppose $\calF$ is a symmetric function on a stream of length $m=\poly(n)$, has value $\left[\frac{1}{\poly(m)},\poly(m)\right]$ and satisfies the $\beta$-approximate triangle inequality. 
Let $\kappa>2\beta+1$ be a fixed constant and suppose there exists a non-adaptive turnstile streaming algorithm that uses $S(n)\cdot\log\frac{1}{\delta}$ bits of space and with probability at least $1-\delta$, outputs a $\kappa$-approximation to $\calF$. 
Then for any constant $C>1$, there exists an adversarially robust insertion-deletion streaming algorithm on a stream of length $m$ that uses $\tO{n^{1/C}}\cdot S(n)$ bits of space and with high probability, outputs a $\kappa^{\O{c}}$-approximation to $\calF$ at all times.  
\end{restatable}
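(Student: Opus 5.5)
We build a hierarchy of $C$ levels of ``estimator--corrector'' sketches, each level only ever answering a bounded number of adaptive queries, so that a non-adaptive sketch with failure probability $\delta=1/\poly(n)$ suffices at each level by a union bound over the possible transcripts.

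\textbf{Level structure.} Since $\calF\in[\frac{1}{\poly(m)},\poly(m)]$ and we only need a $\kappa^{\O{C}}$-approximation, the output can be frozen: we report a value $Z$ and keep it until the current value $\calF(\bx^{(t)})$ leaves the window $[Z/\kappa^{\O{1}},\kappa^{\O{1}}Z]$. Because the possible outputs are confined to $\O{\log m}$ geometric values, the transcript is determined by the sequence of change times. Change detection uses the approximate triangle inequality. Suppose at time $t_0$ we store a reference (implicitly, via linear sketches started at $t_0$) of the difference $\bx^{(t)}-\bx^{(t_0)}$. Then
\[
\calF(\bx^{(t)})\le\beta\bigl(\calF(\bx^{(t_0)})+\calF(\bx^{(t)}-\bx^{(t_0)})\bigr),
\]
and symmetrically in the other direction. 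So as long as the difference has $\calF$-value small relative to $Z$, the output stays valid. The condition $\kappa>2\beta+1$ is what makes a $\kappa$-approximate difference estimate strong enough to certify this.

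\textbf{Recursion over $C$ levels.} We split the stream into blocks of $n^{1/C}$ ``epochs'' at the top level, each subdivided into $n^{1/C}$ sub-epochs, and so on, for $C$ levels. At each level we run $\tO{n^{1/C}}$ independent non-adaptive copies of the $\kappa$-approximation sketch, each on differences $\bx^{(t)}-\bx^{(s)}$ for some checkpoint $s$. A fresh copy is used whenever the previous one has been ``revealed'' by an output change. Each copy is then queried adaptively only $\O{\log m}$ times, so its correctness follows from the union bound over the at most $(\log m)^{\O{1}}$ possible adaptive histories with $\delta=1/\poly(n)$. This costs only $\log\frac1\delta=\O{\log n}$, giving $\tO{n^{1/C}}\cdot S(n)$ total space.

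\textbf{Error accounting.} Each level composes one triangle-inequality step, so the approximation degrades to $\kappa^{\O{C}}$. The remaining task is to show that the total number of output changes, $m=\poly(n)$ in the worst case, is covered by $(n^{1/C})^C$ fresh sketches. An output change at a level is handled by the level above, which recomputes a new estimate from a fresh difference sketch of its own and resets the checkpoints of all lower levels.

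\textbf{Main obstacle.} The hardest step is bounding the number of changes. In a turnstile stream, $\calF$ can oscillate $\poly(n)$ times, which could exhaust the $n$ fresh sketches. The plan is to charge changes so that the number of resets at each level is at most $n^{1/C}$ per parent epoch. If $\log m=\O{C\log n}$ nested levels with branching $n^{1/C}$ do not suffice directly, the fallback is to permit reuse of a sketch. A sketch remains safe under reuse if what it reveals is only a bit indicating whether the difference has crossed a threshold. A differential-privacy-free argument of this form bounds the information leaked per copy by $\O{\log m}$ bits, so each copy stays correct across its reuses.
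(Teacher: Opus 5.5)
There is a genuine gap, and it is exactly the step you call the ``main obstacle''. Nothing in your scheme bounds how often a sketch that has seen the prefix must be consulted. Under your rule (freeze the output, and reset all checkpoints whenever the output leaves its window in \emph{either} direction) that number can be $\Theta(m)$. Let the adversary alternate the stream between two states $\bz$ and $\bw$ with $\calF(\bw)\ll\calF(\bz)$, e.g., by repeatedly deleting and reinserting $\bz-\bw$.

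After each upward move your checkpoint is reset to $\bz$. On the next downward move the difference $\bw-\bz$ has $\calF$-value within an $\O{\beta}$ factor of $\calF(\bz)$. That is consistent both with $\calF(\bx^{(t)})\approx\calF(\bz)$ and with $\calF(\bx^{(t)})$ tiny. So no difference sketch started at or after the checkpoint can decide, and an older sketch (ultimately the root one) must be queried once per oscillation. No charging argument can cap resets at $n^{1/C}$ per parent epoch, because the adversary, not the algorithm, sets their number.

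The fallback fails too. Releasing a threshold bit on each of many reuses is precisely what the adaptive attacks on linear sketches exploit. The number of transcripts grows like $\binom{m}{k}$ in the number $k$ of flips, so the leakage per copy is not $\O{\log m}$ bits. Relatedly, even $\O{\log m}$ adaptive queries give $\binom{m}{\O{\log m}}$ histories, not $(\log m)^{\O{1}}$, since the query times matter.

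The missing idea is the paper's learner, which makes the scheme asymmetric. Keep an iterate $\bv$, always equal to $-\bq$ for some past query and hence representable in the sketches that need it. Output $\calF(\bz-\bv)+\calF(\bv+\bq)$, where the first term is read from the old sketch at only the $\O{L}$ distinct values of $\bv$, and the second is computed recursively on fresh sketches. In your language this is ``value at the checkpoint plus value of the difference since the checkpoint''. By the approximate triangle inequality it is always at least $\calF(\bz+\bq)/\beta$, so increases never need the corrector.

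The corrector $A_i$ intervenes only on overestimation, $P_i+Q_i>\kappa^{3i}A_i$. Then Lemma~\ref{lem:precond:tri} gives $\calF(\bz+\bq)\le\frac{\beta+1}{\kappa-\beta}\calF(\bz-\bv)$. So moving the checkpoint ($\bv\gets-\bq$) shrinks the potential $\calF(\bz-\bv)$ by a constant factor. This, not window certification, is where $\kappa>2\beta+1$ is used.

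Hence there are at most $L=\O{\log n}$ corrector interventions per block, regardless of how often $\calF$ oscillates. In the example above, after the first drop the checkpoint stays at $\bw$ and every later state is estimated correctly. With $\O{1}$ levels of branching $B=n^{1/C}$, the transcript count $\binom{m}{BL}\cdot 2^{\O{\log n}(BL+B)}$ gives $\log\frac{1}{\delta}=\tO{n^{1/C}}$ per sketch, with one active sketch per level. That yields the claimed $\tO{n^{1/C}}\cdot S(n)$ space.
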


We apply \Cref{thm:tri} to the $L_1$ norm, the support size $L_0$, the $L_p$ norms for $p \in (0,1)$, and many other loss functions such as the Huber loss functions, M-estimators, and so on. 

\subsection{Technical Overview}
Recall that from the previous discussion, given a data stream that induces a frequency vector $\bx\in\mathbb{R}^n$, a common algorithmic approach is to generate a random sketch matrix $\bA\in\mathbb{R}^{r\times n}$ and maintain $\bA\bx$ throughout the stream. 
The matrix should be designed so that there exists a post-processing function $g$, so that $g(\bA\bx)$ can be used to estimate the desired quantity for $\bx$, e.g., $g(\bA\bx)\approx\|\bx\|_p$. 
Since $\bA\bx$ only has $r$ entries, the algorithm often only needs to store $\O{r\log n}$ bits throughout the stream, and for many interesting problems, it is possible to achieve $r\ll n$. 
For example, constant-factor approximations to the $L_p$ norm are possible with  $r=\polylog(n)$ for $p\le 2$ and $r=\O{n^{1-2/p}}$ for $p>2$. 
Unfortunately, recent results by \cite{GribelyukLWYZ24,GribelyukLWYZ25} prove that any such linear sketch for adversarially robust insertion-deletion streams requires dimension $\Omega(n)$. 
One inherent reason is that linear sketches only use randomness at the start to generate the sketching matrices, thereafter, the algorithm is deterministic.
In order for the adversary to break such algorithms, it suffices to take $\poly(n)$ queries to gradually figure out these initial random bits, and then construct a hard query for the corresponding sketching matrix.
This is how the recent lower bounds were proved.
Thus, to circumvent the lower bounds, it informally seems like new random bits need to be introduced during the execution of the streaming algorithm, rather than all at the beginning. 
Our solution achieves small space by initializing new independent sketches in the ``middle'' of the data stream. 

To that end, suppose that the underlying frequency vector $\bx$ satisfies $\bx=\bz+\bq$, where $\bz$ is the frequency vector defined by all of the updates of the stream, up to some time $t$ and $\bq$ is the frequency vector defined by the remaining updates after $t$, i.e., $\bz$ corresponds to some prefix of the stream and $\bq$ corresponds to the remaining suffix of the stream. 
Consider a sketch matrix $\bB$ that was initialized at time $t+1$, so that the algorithm tracks $\bB\bq$. 
For a random sketch matrix $\bB$ designed for $F_2$ moment estimation, if we knew $\bB\bz$ (which we actually do not, since $\bB$ is only initialized after $\bz$ appears), then since $\bB\bx=\bB\bz+\bB\bq$, we could approximate $\|\bx\|_2^2$ from $\bB\bx$.
Below, we will show that either we can approximate $\|\bx\|_2^2$ without knowing $\bB\bz$, or we will gradually learn $\bB\bz$. 

\subsubsection{Technical Overview for \texorpdfstring{$F_2$}{F2} Moment Estimation}
Now suppose there exists another random sketch matrix $\bA$ designed for $F_2$ moment estimation, but initialized at the beginning of the stream, so that the algorithm is able to store $\bA\bz$ as well as to track $\bA\bx$ after time $t$. 
In this case, the information we have about $\bz$ is the sketch $\bA\bz$.
Suppose $\bq$ has a random direction independent of $\bz$, then we cannot hope to learn anything about $\bB\bz$ from it.
However, this is exactly a case where we do not need to know $\bB\bz$ to approximate $\|\bx\|_2^2$.
Since random directions are nearly orthogonal in large dimensions, we have $\|\bz+\bq\|_2^2\approx \|\bz\|_2^2+\|\bq\|_2^2$, which in turn, can be estimated from $\bA\bz$ and $\bB\bq$ respectively.
On the other hand, if $\|\bz\|_2^2+\|\bq\|_2^2$ is a bad approximation to $\|\bx\|_2^2$, then because
\[\|\bx\|_2^2=\|\bz+\bq\|_2^2=\|\bz\|_2^2+2\langle\bz,\bq\rangle+\|\bq\|_2^2,\] $|\langle\bz,\bq\rangle|$ must be large in magnitude, relative to both $\|\bz\|_2^2$ and $\|\bq\|_2^2$.  
However, it can only happen when $\bz$ has a non-trivial alignment with $\bq$, in which case we can hope to learn something about $\bz$ from $\bq$. 
We can use $\bA\bx$ to check whether $\|\bz\|_2^2+\|\bq\|_2^2$ is a bad approximation, and either return an accurate estimate to $\|\bx\|_2^2$, or learn some extra information about $\bz$ in each step.
This intuition leads us to a version of our algorithm in a simplified setting.

\begin{figure*}[!thb]
\centering
\begin{tikzpicture}[scale=1.25]

\node at (4,2.5){Estimator};
\draw(3,2) rectangle+(2,1); 
\draw(4,3)[->] -- (4,3.5);
\node at (4,4.1){Estimate $\|\bz-\bz'\|_2^2+\|\bz'+\bq\|_2^2$};
\node at (4,3.7){as current estimate for $\|\bz+\bq\|_2^2$};

\draw(5,2)[dashed,->,color=purple] -- (6.8,1.2);
\node at (7.8,1.7){\textcolor{purple}{Estimate verification}};

\node at (7,0.5){Corrector};
\draw(6,0) rectangle+(2,1); 
\draw(8,0.5)[->] -- (8.5,0.5);
\node at (9.7,0.8){Use sketch $\bA(\bz+\bq)$};
\node at (9.7,0.5){to output if};
\node at (9.7,0.2){estimator incorrect};

\draw(6,0.5)[dashed,->,color=purple] -- (2.2,0.5);
\node at (4.1,0.8){\textcolor{purple}{Incorrect estimate signal}};

\node at (1,0.5){Learner};
\draw(0,0) rectangle+(2,1); 
\node at (-1,0.7){Update $\bz'$};
\node at (-1,0.3){on signal};

\draw(1,1)[dashed,->,color=purple] -- (2.8,1.8);
\node at (0.4,1.7){\textcolor{purple}{Pass current iterate $\bz'$}};

\node at (4,-0.5){\textcolor{purple}{At most $\O{\frac{1}{\eps^2}\log n}$ steps before convergence of $\bz'$ to $\bz$.}};
\end{tikzpicture}
\caption{Flowchart for learner, corrector, and estimator framework. 
The estimator outputs a current estimate, the corrector verifies its accuracy, and the learner updates its internal state $\bz'$ based on incorrect estimates.}
\label{fig:flowchart}
\end{figure*}

\paragraph{Estimator, corrector, learner framework.}
Consider the following setting: 
\begin{enumerate}
\item 
We are initially given a sketch $\bA\bz$, then vectors $\bq$ arrive in the stream, for which, we can track $\bA\bq$ and $\bB\bq$;
\item 
For each $\bq$, we must output an estimate of $\|\bz+\bq\|_2^2$;
\item 
We focus on only ``protecting'' $\bA$ from the adversary: for now, assume we have accurate estimates of the norms for any vectors sketched by $\bB$. 
In fact, it may be helpful to think of $\bB$ as the identity matrix (e.g. the algorithm knows each $\bq$ exactly), and the task is to produce an estimate of $\|\bz + \bq\|_2^2$ while using $\bA$ as few times as possible.
\end{enumerate}

Our simplified $F_2$ moment estimation algorithm for this setting consists of three main ingredients:
\begin{enumerate}
\item 
An estimator, whose role is to output estimates of $\|\bz+\bq\|_2^2$.
\item
A learner, whose role is to learn $\bz$ to facilitate the estimation. 
When the estimator is incorrect, the learner will gain information about $\bz$ from $\bq$. 
\item
A corrector, whose role is to inform the learner/estimator when the estimate is incorrect. 
\end{enumerate}
The corrector uses $\bA\bx$ to detect when an estimate is incorrect. 
The learner will maintain a vector $\bz'$, which is a linear combination of the queries $\bq$ on which the algorithm was incorrect, though algorithmically, only $\bA\bz'$ and $\bB\bz'$ are explicitly maintained. 
The estimator will output $\|\bz-\bz'\|_2^2+\|\bz'+\bq\|_2^2$ as an estimate for $\|\bx\|_2^2=\|\bz+\bq\|_2^2$, where the first term is estimated using $\bA\bz-\bA\bz'$, and the second term is estimated using $\bB\bz'+\bB\bq$.
As sanity checks, note that if $\bz'=\mathbf{0}^n$, i.e., we have no information about $\bz$, the estimator corresponds to the previous estimator $\|\bz\|_2^2+\|\bq\|_2^2$, and on the other hand if $\bz'=\bz$, i.e., we have successfully determined $\bz$ exactly, then the estimator corresponds to the desired quantity $\|\bz+\bq\|_2^2$. 
If the estimator was initially incorrect, we will simply return the estimate of the \textit{corrector} for this query based on $\bA\bx$.


\begin{figure*}[!thb]
\centering
\begin{tikzpicture}[scale=0.5]
    \node [minimum size=0, inner sep=0, label=below:{\scriptsize $\bz'$}] at (0,0) (zp) {};
    \node [minimum size=0, inner sep=0, label=below:{\scriptsize $\bz''$}] at (1,0) (zpp) {};
    \node [minimum size=0, inner sep=0, label=above:{\scriptsize $\bz$}] at (1, 4) (z) {};
    \node [minimum size=0, inner sep=0, label=below:{\scriptsize $-\bq$}] at (7, 0) (q) {};
    \draw (zp) [thick] -- (z);
    \draw (z) edge [thick,dotted] (zpp);
    \draw (zp) edge [thick,->] (0.8,0);
    \draw (z) edge [thick] (q); 
    \draw (q) edge [thick] (zp);
\end{tikzpicture}
\caption{If $\|\bz-\bz'\|_2^2+\|\bz'+\bq\|_2^2<(1-\eps)\|\bz+\bq\|_2^2$, moving $\bz'$ towards $-\bq$ reduces the distance to $\bz$.}
\label{fig:incorrect_estimate}
\end{figure*}

\paragraph{Robustness and progress.}
By standard minimax arguments, it suffices for us to consider a deterministic adversary. 
In the algorithm framework above, one vulnerability is that the corrector must use $\bA\bx$ to flag incorrect estimates. Each such interaction potentially leaks information about $\bA$, so after a large number of interactions, the corrector itself might be broken by adversarial inputs. 
To this end, suppose that the estimator is only incorrect at most $L$ times.
Then, we use the key insight from the bounded computation paths technique of \cite{Ben-EliezerJWY22}: note that there are at most $\binom{m}{L}$ possible sets of times where the corrector can flag incorrect estimates over a stream of length $m$. 
Thus, there are only at most $\binom{m}{L}$ adaptive input streams that a deterministic adversary can generate, based on the information that is output by the corrector.
Now, by setting the failure probability of the corrector to be $\delta = \frac{1}{\poly(n)}\cdot\binom{m}{L}^{-1}$, by a union bound, the corrector will be robust to all possible adversarial streams generated by the deterministic adversary. 
Since the space complexity of streaming algorithms often scales with $\log\frac{1}{\delta}$, the resulting space complexity will roughly have a linear dependence on $L$. 
Note that we also use the estimate of the \emph{corrector} $L$ times as the output and further use $\bA$ at most $L$ times to estimate $\|\bz-\bz'\|_2^2$ after each update to $\bz'$, but a sketch of size linear in $L$ can also be used $\O{L}$ times against an adaptive adversary (e.g., use a different block of rows for each query).
Thus, it remains to upper bound the total number $L$ of incorrect outputs made by the estimator. 

To do so, consider each time the estimate is inaccurate (see \Cref{fig:incorrect_estimate}).
In this case, $\bz-\bz'$ must be far from being orthogonal with $\bz'+\bq$. 
Thus, by moving $\bz'$ towards $-\bq$ (or away from $-\bq$), the distance to $\bz$ will decrease non-trivially, i.e., $\bz'$ approximates $\bz$ better.
Formally, we use $\|\bz-\bz'\|_2^2$ as the ``progress measure''.
Then, an incorrect estimate for $\|\bz + \bq \|_2^2$ implies 
\[\left\lvert\|\bz-\bz'\|_2^2+\|\bz'+\bq\|_2^2-\|\bz+\bq\|_2^2\right\rvert\ge\eps\cdot\|\bz+\bq\|_2^2.\]
By expanding the both sides, we must have \[| \langle \bz - \bz', \bz' + \bq \rangle | \geq \frac{\eps}{2} \| \bz + \bq\|_2^2 = \frac{\eps}{2} \left( \|\bz - \bz'\|_2^2 + \|\bq + \bz'\|_2^2 + 2\langle \bz - \bz', \bq + \bz' \rangle \right).\]
Suppose that $\langle \bz - \bz', \bz' + \bq \rangle \geq \frac{\eps}{2} \|\bz + \bq\|_2^2$, as the other case is very similar. In particular, this can be used to see that $\langle \bz - \bz', \bz' + \bq \rangle \geq \eps\cdot\|\bz - \bz'\|_2\cdot\|\bq + \bz'\|_2$. 
Now, when the algorithm was incorrect on query $\bq$, the vector $\bz'$ will be updated to $\bz'' = \bz' + \alpha(\bq + \bz')$ for some carefully chosen $\alpha$. We have 
\begin{align*}
\|\bz - \bz''\|_2^2 &= \|\bz - \bz' - \alpha(\bq + \bz')\|_2^2 \\ 
&= \|\bz - \bz'\|_2^2 + \alpha^2 \|\bq + \bz'\|_2^2 - 2\alpha \langle \bz - \bz', \bq + \bz'\rangle \\ 
& \leq \|\bz - \bz'\|_2^2 +\alpha^2 \|\bq + \bz'\|_2^2 - 2\alpha \cdot \eps \|\bz - \bz'\|_2 \cdot\|\bq + \bz'\|_2.
\end{align*}
\noindent
Then, by setting the step-size $\alpha = \frac{\eps \cdot \|\bz - \bz'\|_2}{\|\bq + \bz'\|_2}$, we get 
\[\|\bz - \bz''\|_2^2 \leq (1-\eps^2) \|\bz - \bz'\|_2^2.\]

Importantly, this step-size $\alpha$ can also be estimated up to a $(1\pm \eps)$ multiplicative factor  with high probability using robust $L_2$ sketches and a bounded computation paths argument. Since $\|\bz\|_2^2\le\poly(n)$ for a stream of length $m=\poly(n)$, it follows that $\bz'$ will be updated at most $\O{\frac{1}{\eps^2}\log n}$ times, which gives our upper bound on $L$.
Thus, the algorithm solves the problem in the simplified setting with only $\poly\left(\frac{1}{\eps},\log n\right)$ rows in $\bA$.

\paragraph{Recursive estimator.}
To extend the above approach to a full streaming algorithm, the main shortcoming is that while the algorithm with sketching matrix $\bB$ has access to $\bB\bq$ and can gradually learn $\bB\bz$, it must correctly answer a large number of queries based on $\bB\bq$ throughout the stream. 
For example, if $\bx=\bz+\bq$ is the frequency vector for the full stream of length $m$ and $\bz$ and $\bq$ each correspond to frequency vectors for substreams of length $\frac{m}{2}$, then $\bB$ must itself be robust to $\frac{m}{2}$ adaptive queries en route to $\bq$, and similarly $\bA$ must be robust to $\frac{m}{2}$ adaptive queries that ultimately form $\bz$. 
To address this, the natural approach is to recurse on both halves of the stream, corresponding to $\bz$ and $\bq$ respectively. 

Recall that our estimator is $\|\bz-\bz'\|_2^2+\|\bz'+\bq\|_2^2$, and we will maintain both $\bA\bz'$ and $\bB\bz'$ as we update $\bz'$.
The first term $\|\bz-\bz'\|_2^2$ is estimated from $\bA\bz-\bA\bz'$.
The key observation is that estimating the second term $\|\bz'+\bq\|_2^2$ is a task of the same form as the full problem, but on a shorter stream: given a (rounded) sketch $\bB\bz'$ for an unknown $\bz'$, estimate $\|\bz'+\bq\|_2^2$ for $\bq$ that arrives in a stream.
This allows us to recursively estimate the second term.

\paragraph{Two levels and beyond.} 
For example, consider a two-level scheme where we can write $\bq=\bq_1+\bq_2$, so that $\|\bz'+\bq\|_2^2=\|\bz'+\bq_1+\bq_2\|_2^2$. 
We can use a sketch matrix $\bB_1$ to track $\bB_1(\bz'+\bq_1)$ and a sketch matrix $\bB_2$ to track $\bB_2\bq_ 2$. 
Similarly, we implement a learner at this second level, whose role is to learn $\bz'+\bq_1$ over the course of the stream updates of $\bq_2$. 
In particular, the learner maintains a vector $\bu$, which is a linear combination of queries for which an estimator at the second level is inaccurate, when using $\|\bz'+\bq_1-\bu\|_2^2+\|\bu+\bq_2\|_2^2$ as an estimate for $\|\bz'+\bq_1+\bq_2\|_2^2$. By the above intuition, this estimate is incorrect if $\langle\bz'+\bq_1,\bq_2\rangle$ is large, meaning the algorithm for $\bq_2$ learns information about $\bz'+\bq_1$ from $\bq_2$. 

To check if the estimate is inaccurate, we can again interface the learner and estimator at the second level with a corrector for $\|\bz'+\bq_1+\bq_2\|_2^2$. 
One caveat is that the iterate $\bz'$ might change over the course of the stream for $\bq_2$ because the estimator at the first level may be incorrect for some queries $\bq$. 
To address this, we simply start another block at the second level whenever $\bz'$ is updated. 
That is, instead of having two blocks each corresponding to streams of length $\frac{m}{2}$, we dynamically form blocks when either the stream length has reached a multiple of $\frac{m}{2}$ or when $\bz'$ is updated to $\bz' + \alpha(\bq + \bz')$ for some incorrect query $\bq$. 
Because we previously upper bounded the total number of updates to iterate $\bz'$ by $L=\O{\frac{1}{\eps^2}\log n}$, then the number of blocks at the second level is at most $\O{\frac{1}{\eps^2}\log n}$.

We can now split the stream into $B+\O{\frac{1}{\eps^2}\log n}$ blocks of size at most $\frac{m}{B}$ for the two-level scheme, and more generally, we recursively split each block into $B$ blocks toward a full $B$-ary tree. To avoid the accumulation of errors over blocks, we can again query a corrector at each level that maintains $\bA(\bq_1+\bq_2+\ldots+\bq_k)$ after $k$ blocks. 
Then, the corrector at each level must interact with $L = \O{\frac{1}{\eps^2}\log n}$ adaptive queries for \textit{each} of the $B$ child blocks in the next level in response to incorrect estimates over the entire stream, as well as $B$ adaptive queries to correct the estimates at the end of each block. 
Since we require $B^H\ge m$ and $B \geq L$ by construction, and each corrector must handle $B \cdot L + B$ queries corresponding to incorrect estimates, it suffices to set $B=\O{\frac{1}{\eps^2}\log n}$ and $H=\log m=\O{\log n}$ for $m=\poly(n)$. 
As each sketch must now only be robust to $\O{B^2}$ queries, the corresponding space complexity is $\poly\left(\frac{1}{\eps},\log n\right)$ per sketch. 
Since at most $\poly(B,H)=\poly\left(\frac{1}{\eps},\log n\right)$ sketches are maintained at any given time, our algorithm has the desired space complexity. 
\ignore{

More generally, we can split the stream into $B+\O{\frac{1}{\eps^2}\log n}$ blocks of size at most $\frac{m}{B}$ for the two-level scheme. 
To avoid the accumulation of errors over blocks, we can again query a corrector that maintains $\bA(\bq_1+\bq_2+\ldots+\bq_k)$ after $k$ blocks. 
Then the corrector at the first level must interact with $B+\O{\frac{1}{\eps^2}\log n}$ adaptive queries both in response to incorrect estimates over the entire stream and to correct the estimates at the end of each block. 
The corrector at the second level must interact with $\frac{m}{B}$ adaptive queries within its block of $\frac{m}{B}$ updates. 
Hence, the space is dominated by roughly $B+\frac{m}{B}$, which can be optimized at $B=\sqrt{m}$. 
While this is unimpressive for $m=\poly(n)$, one can observe this corresponds to a two-level tree and recursively split each block into $B$ blocks toward a full $B$-ary tree with height $H$. 
Since we require $B^H\ge m$ and each corrector must handle $\O{\frac{1}{\eps^2}\log n}$ queries corresponding to incorrect estimates, it suffices to set $B=\O{\frac{1}{\eps^2}\log n}$ and $H=\log m=\O{\log n}$ for $m=\poly(n)$. 
As each sketch must now only be robust to $B$ queries, the corresponding space complexity is $\poly\left(\frac{1}{\eps},\log n\right)$ per sketch. 
Since at most $\poly(B,H)=\poly\left(\frac{1}{\eps},\log n\right)$ sketches are maintained at any given time, our algorithm has the desired space complexity. }

\begin{figure*}[!thb]
\centering
\begin{tikzpicture}[scale=1.25]
\draw(-1,4.2)[thick,->] -- (9,4.2);
\node at (4,4.5){Stream};

\draw(7,4.8)[color=blue] -- (7,1.2);
\draw(0,3.3) rectangle+(8,0.5); 
\fill[pattern=north east lines, draw=none] (7,3.3) rectangle (8,3.8);
\node at (4,3.55){$\bx=\bz+\bq$};
\draw(7,3.3)[thick,color=red] -- (7,3.8);
\node at (-0.75,3.55){Level $H$};

\draw(0,2.5) rectangle+(3.8,0.5); 
\node at (1.9,2.75){$\bz$};
\draw(4.2,2.5) rectangle+(3.8,0.5);
\fill[pattern=north east lines, draw=none] (7,2.5) rectangle (8,3);
\node at (5.6,2.75){$\bq$};
\draw(7,2.5)[thick,color=red] -- (7,3);
\node at (9.2,2.75){Guess $\bz'$ for $\bz$};
\draw plot [smooth] coordinates {(5.5,2.34) (7.8,2.3) (8.7,2.5)};
\draw[->, thick] (5.5,2.35) -- ++(0,0.15);
\node at (-0.95,2.75){Level $H-1$};

\draw(0,1.7) rectangle+(1.8,0.5); 
\draw(2,1.7) rectangle+(1.8,0.5); 
\draw(4.2,1.7) rectangle+(1.8,0.5); 
\node at (5.1,1.95){$\bq_0$};
\draw(6.2,1.7) rectangle+(1.8,0.5); 
\fill[pattern=north east lines, draw=none] (7,1.7) rectangle (8,2.2);
\node at (6.6,1.95){$\bq_1$};
\draw(7,1.7)[thick,color=red] -- (7,2.2);
\node at (9.5,1.95){Guess $\bz''$ for $\bz'+\bq_0$};
\draw plot [smooth] coordinates {(6.5,1.54) (7.8,1.5) (8.7,1.7)};
\draw[->, thick] (6.5,1.55) -- ++(0,0.15);
\node at (-0.95,1.95){Level $H-2$};

\node at (4,1.3){$\vdots$};
\end{tikzpicture}
\caption{Example of recursive tree structure on stream with $B=2$ blocks and $H$ levels. 
Level $H-1$ outputs $\|\bz-\bz'\|_2^2+\|\bz'+\bq\|_2^2$ as estimate for $\|\bx\|_2^2=\|\bz+\bq\|_2^2$. 
Level $H-2$ outputs $\|\bz'+\bq_0-\bz''\|_2^2+\|\bz''+\bq_1\|_2^2$ as estimate for $\|\bz'+\bq\|_2^2=\|\bz'+\bq_0+\bq_1\|_2^2$. }
\label{fig:tree}
\end{figure*}

\subsubsection{Approximate triangle inequality}
As a natural extension of our $F_2$ algorithm, we show an algorithm to robustly estimate functions $\calF$ that satisfy an approximate triangle inequality and have non-robust turnstile sketches.
Indeed, observe that for a function $\calF$ that satisfies triangle inequality and $\calF(\bv)=\calF(-\bv)$ for all $\bv\in\mathbb{R}^n$, if we seek an $\alpha$-approximation for a sufficiently large constant $\alpha>3$, then we will use $\calF(\bz-\bz')+\calF(\bz'+\bq)$ as the estimator for $\calF(\bz+\bq)$. 
By triangle inequality, this is always an overestimate. 
The estimator will only be incorrect when
\[\calF(\bz-\bz')+\calF(\bz'+\bq)\ge \alpha \cdot \calF(\bz+\bq).\]
In this case, by triangle inequality, $\calF(\bz+\bq)+\calF(\bz-\bz')\ge \calF(\bz'+\bq)$, and so $\calF(\bz+\bq)+2\calF(\bz-\bz')\ge\alpha\cdot \calF(\bz+\bq)$, we must have
\[\calF(\bz+\bq)\le\frac{2}{\alpha-1}\cdot \calF(\bz-\bz').\]
Now if we take $\bz''$ to be the closest vector to $\bz$ in the span of $\bq$ and $\bz'$ under the function $\cal{F}$, we have
\[\calF(\bz-\bz'')\le \calF(\bz+\bq)\le\frac{2}{\alpha-1}\cdot \calF(\bz-\bz'),\]
which shows progress. 
Thus, the learner can apply this iterative procedure of updating $\bz'$ will converge to $\bz$ within $\O{\log n}$ steps, after which the learner will have recovered $\bz$. 
A similar analysis can be applied for $\calF$ that only satisfies approximate triangle inequality.
We can then apply the same framework with a corrector, an estimator, and a learner and in fact, we can use recursion and design a robust algorithm that outputs a $\Theta(1)$-approximation to $\calF$ using $n^{1/C}$ levels for some constant $C>1$. 

\section{\texorpdfstring{$F_2$}{F2} Algorithm}
\label{sec:ftwo}
In this section, we describe and analyze our robust algorithm for $F_2$ moment estimation on adversarial insertion-deletion streams, culminating in the statement of \Cref{thm:ltwo}. 
We first recall the classical AMS algorithm for $F_2$ estimation:
\begin{theorem}[AMS Algorithm for $F_2$ estimation]
\cite{AlonMS99}
\label{thm:ams}
Let $\eps\in(0,1)$ be an accuracy parameter and $\delta\in(0,1)$ be a failure probability. 
Let $k=\O{\frac{1}{\eps^2}\log\frac{1}{\delta}}$. 
There exists a turnstile streaming algorithm $\AMS$ that maintains a linear sketch $Ax$, where $A \in \mathbb{R}^{k \times n}$ is drawn from an explicit distribution over random matrices, uses $\O{k\log n}$ bits of space, and with probability $1-\delta$, outputs a $(1+\eps)$-approximation to the $F_2$ moment on streams of length $m=\poly(n)$.
\end{theorem}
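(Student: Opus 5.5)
The plan is the classical AMS argument. Draw a $k\times n$ sign matrix $A$ with $A_{ji}=\sigma_j(i)/\sqrt{k}$, where each row uses an independent hash $\sigma_j:[n]\to\{\pm1\}$ from a $4$-wise independent family. The algorithm maintains $Ax$ under turnstile updates: on update $(a_t,\Delta_t)$ it adds $\Delta_t\sigma_j(a_t)$ to counter $j$. By linearity, the counters always equal the sketch of the current frequency vector.

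\textbf{Single-row estimator.} For one row, let $Y=\langle \sigma,x\rangle=\sum_i\sigma(i)x_i$. Pairwise independence gives $\mathbb{E}[Y^2]=\sum_i x_i^2=F_2$. $4$-wise independence gives $\mathbb{E}[Y^4]=\sum_i x_i^4+6\sum_{i<i'}x_i^2x_{i'}^2\le 3F_2^2$, so $\mathrm{Var}(Y^2)\le 2F_2^2$.

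\textbf{Boosting.} Group the $k$ rows into $\O{\log\frac1\delta}$ groups of size $\O{1/\eps^2}$. Average $Y_j^2$ within each group. By Chebyshev, each group average is a $(1\pm\eps)$-approximation to $F_2$ with probability at least $2/3$. The median of the group averages then fails only if at least half the groups fail. By a Chernoff bound, this happens with probability at most $\delta$, so $k=\O{\frac{1}{\eps^2}\log\frac1\delta}$ rows suffice.

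\textbf{Space.} Each $4$-wise independent hash is a degree-$3$ polynomial over a field of size $\poly(n)$, so it needs $\O{\log n}$ bits. Since $m=\poly(n)$ and $|\Delta_t|\le\poly(n)$, every counter is an integer of magnitude $\poly(n)$, so it also needs $\O{\log n}$ bits. The total is $\O{k\log n}$ bits, and the distribution over $A$ is explicit.

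The main obstacle is this space accounting. We must store the random bits as short seeds rather than as a fully random matrix, which is exactly what $4$-wise independence permits. We must also check that counter magnitudes stay polynomial in $n$ for every prefix of the stream, which follows from the bounds on the stream length and update sizes.
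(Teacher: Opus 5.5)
Your proposal is correct and is the classical Alon--Matias--Szegedy argument: $4$-wise independent sign rows give $\mathbb{E}[Y^2]=F_2$ and $\mathrm{Var}(Y^2)\le 2F_2^2$, Chebyshev is applied to means of $\O{1/\eps^2}$ rows, a Chernoff bound handles the median of $\O{\log\frac{1}{\delta}}$ such means, and $\O{\log n}$-bit seeds and counters give the space bound. The paper gives no proof of this statement and simply cites \cite{AlonMS99}, whose argument is the one you reconstructed.
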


\begin{figure*}[!htb]
\begin{mdframed}
\textbf{Algorithm}:
\begin{enumerate}
\item 
Maintain a tree structure on the stream, so that the lowest level $1$ separates the stream into single updates, and each node has $B$ children. 
\item
Each new node in level $i$ is formed either when $B$ nodes in level $i-1$ have passed in the stream or when the iterate in the parent node (in level $i + 1$) has been updated
\item 
For each block $C_{i,j}$, use a matrix $\bB_{i,j}$ to sketch various vectors, e.g., the frequency vector in the block and the iterate vectors, as required by the subroutines
\item
For each update, let $H$ be the height of the tree and let $\bB_{H+1}$ be a sketch matrix at the top
\item
Output $\EstLevel(H+1,\bB_{H+1}\bz)$ where $\bz$ is the frequency vector for the stream
\end{enumerate}
\end{mdframed}
\caption{Adversarially robust $F_2$ norm estimation algorithm on insertion-deletion streams}
\label{fig:alg:ltwo}
\end{figure*}

\paragraph{Algorithm description.}  
The algorithm maintains a hierarchical tree over the stream, where the lowest level, i.e., level 1, corresponds to blocks of $B$ individual updates and higher levels group $B$ child nodes into blocks.  
Each block $C_{i,j}$ is sketched using a matrix $\bB_{i,j}$ to track the frequency vector and any intermediate iterates.  
As updates arrive, sketches are propagated up the tree, with the top-level sketch $\bB_{H+1}$ summarizing the entire stream.  
The $L_2$ estimate is then computed using a subroutine $\EstLevel$ at the top level of a recursive structure, combining the estimates from lower levels to produce a robust estimate of the norm. 
This tree structure limits the number of adaptive queries each sketch must handle.  
By choosing block size $B = \O{\frac{1}{\eps^2}\log n}$ and tree height $H = \O{\log n}$, the algorithm achieves $\poly(\frac{1}{\eps}, \log n)$ space while maintaining adversarial robustness.
The algorithm appears in \Cref{fig:alg:ltwo} and is illustrated by \Cref{fig:tree}. 

\paragraph{The $\EstLevel$ subroutine.}
We now describe the subroutine $\EstLevel$ and its guarantees. 
The $\EstLevel$ procedure is responsible for computing robust $L_2$ estimates at a given level $i$ of the tree.  
To avoid notational clutter, we consider the case where $\bq$ is the query vector at level $i$, but in general, the algorithm needs to further decompose $\bq=\bq_0+\bq_1$ into the component $\bq_0$ for the left siblings in the previous block and the component $\bq_1$ in the active block (the active blocks are the ancestors of the leaf corresponding to the current update in the stream). 
We also write $\bz$ to be the portion of the stream seen in previous blocks, so that the goal is to estimate $\|\bz+\bq\|_2^2$. 

\begin{algorithm}[!htb]
\caption{$\EstLevel(i,\bB_i\bz)$ for $F_2$ moment estimation}
\label{alg:est:level:ltwo}
\begin{algorithmic}[1]
\State{Let $\calP_i$ be the active blocks at level $i$}
\State{Let $\bB_i$ be the sketch matrix for $\calP_i$}
\State{}\Comment{$\left(1+\frac{\eps}{100H}\right)$-approximation for $L_2$ norm, robust to $\poly\left(\frac{1}{\eps},\log n\right)$ adaptive queries}
\State{Let $A_i$ be a $\left(1+\frac{\eps}{100H}\right)$-approximation to $\|\bz+\bq\|_2^2$}
\If{$i\neq 1$}
\State{$P_i\gets\|\bB_i\bz+\bB_i\bq_0-\bB_i\bv\|_2^2$}
\State{$Q_i\gets\EstLevel(i-1,\bB_{i-1}(\bv+\bq_1))$}
\Else
\State{$P_i\gets 0$}
\State{Let $\bq_1$ be the active part of the query in $\calP_i$}
\State{$Q_i\gets\|\bB_1\bv+\bB_1\bq_0+\bB_1\bq_1\|_2^2$}
\EndIf
\If{$P_i+Q_i\in\left(1\pm\frac{\eps}{100H}\right)^{3i}\cdot A_i$}
\State{\Return $P_i + Q_i$}
\Else
\State{\Return $A_i$}
\EndIf
\end{algorithmic}
\end{algorithm}


Intuitively, $\EstLevel$ maintains an estimator, a learner, and a corrector.  
The learner maintains a vector $\bv$, which is a linear combination of queries on which the estimator was previously incorrect, effectively learning about $\bz$ over time.  
The estimator attempts to output an approximation to $\|\bz+\bq\|_2^2$ using estimates to $\|\bz-\bv\|_2^2$ and $\|\bv+\bq\|_2^2$. 
The corrector monitors the sketches $\bB_i\bz$ and $\bA\bz$ to detect when the estimator is off, and informs the learner to adjust $\bv$. 
In particular, the sketch $\bB_i$ is initiated when the previous block at level $i+1$ is completed. 
Thus it tracks the frequency vectors induced by all left siblings of the active block at level $i$ and in particular $\bB_i\bz$. 

At a high level, $\EstLevel$ computes two contributions: $P_i$, corresponding to the current block corrected by the learner's iterate, and $Q_i$, recursively obtained from the lower level.  
Formally, $P_i$ estimates $\|\bz-\bv\|_2^2$ while $Q_i$ estimates $\|\bv+\bq\|_2^2$. 
If the sum $P_i + Q_i$ is consistent with an estimate $A_i$ of the desired quantity $\|\bz+\bq\|_2^2$, then the algorithm simply returns $P_i+Q_i$; otherwise, the algorithm outputs $A_i$. 
This mechanism ensures that each sketch only needs to handle a limited number of adaptive queries, while the recursive structure aggregates estimates from all levels to produce a robust and accurate $L_2$ norm for the entire stream.  
This concludes the discussion of the estimator and the corrector, whose interaction algorithm appears in \Cref{alg:est:level:ltwo}, where we use the notation $P_i+Q_i\in\left(1\pm\frac{\eps}{100H}\right)^{3i}\cdot A_i$ to denote the condition that $P_i+Q_i$ is contained within the interval $\left[\left(1-\frac{\eps}{100H}\right)^{3i}\cdot A_i,\left(1+\frac{\eps}{100H}\right)^{3i}\cdot A_i\right]$.

\paragraph{The $\MaintainIter$ subroutine.}
Finally, we describe the learner, which is represented by $\MaintainIter$, and is responsible for updating the iterate $\bv$ within a given active block, as an estimate for $\bz$. 
At the start of each block, $\bv$ is initialized to zero, representing no knowledge about the frequency vector $\bz$ from previous blocks.  
As updates arrive, the estimator computes a tentative estimate for the $L_2$ norm. 
If this estimate is inconsistent with the reference value $A_i$, the corrector signals that the estimator was inaccurate.  
$\MaintainIter$ then adjusts $\bv$ by adding a small, carefully scaled fraction of the current query vector, in the direction that would reduce the estimation error.  
This process allows the learner to gradually accumulate information about $\bz$ over the course of the block, while only working in the sketched spaces.  
The algorithm appears in full in \Cref{alg:mainiter:ltwo}, where we emphasize that all of these calculations are performed in the sketched space, i.e., under the image of a sketch matrix $\bB_i$ at level $i$, as the algorithm does not have explicit access to $\bv$ and $\bq$. 

\begin{algorithm}[!htb]
\caption{$\MaintainIter(i)$ for $F_2$ moment estimation}
\label{alg:mainiter:ltwo}
\begin{algorithmic}[1]
\State{Initialize $\bv=\textbf{0}^n$ at the beginning of the active block at level $i$}
\State{$\bq\gets\bq_0+\bq_1$}
\If{$P_i+Q_i\notin\left(1\pm\frac{\eps}{100H}\right)^{3i}\cdot A_i$}
\State{$\sigma \gets \textrm{sign}(A_i - (P_i + Q_i))$}
\State{$\alpha\gets\frac{1}{4}\sqrt{\frac{P_i}{Q_i}}\cdot\sigma\cdot\frac{\eps}{100H}$}
\State{$\bv\gets\bv+\alpha(\bv+\bq)$}
\EndIf
\State{\Return $\bv$ truncated to precision $\frac{1}{\poly(n)}$}
\Comment{All operations performed in the sketch space}
\end{algorithmic}
\end{algorithm}

The key step is showing that the learner can recover $\bz$ in a small number of updates, because each iteration corresponds to an incorrect output by the estimator, which therefore ``charges'' an additional adaptive interaction to the corrector. 
This is achieved by upper bounding the total number of iterations by the learner in \Cref{lem:bounded:iterations:ltwo}, as follows. 
Let $\bz'$ be the value of the iterate $\bv$ prior to an update. 
We observe that whenever the estimator outputs an inaccurate value for $\|\bz + \bq\|_2^2$, there must be correlation between the residual $\bz - \bz'$ and the current query $\bq + \bz'$, since an incorrect estimate implies 
\[|\langle \bz - \bz', \bq + \bz' \rangle| \ge \eps \cdot \|\bz - \bz'\|_2 \cdot \|\bq + \bz'\|_2.\]
Then by updating $\bz'$ to $\bz'' = \bz' + \alpha (\bq + \bz')$, where the step-size $\alpha$ is chosen proportional to the residual magnitude, possibly flipped in sign according to $\sigma = \textrm{sign}(\langle \bz - \bz', \bq + \bz'\rangle)$ and using $\|\bz - \bz'\|_2^2$ as a progress measure, one can verify that 
\[\|\bz - \bz''\|_2^2 \le (1-\eps^2) \|\bz - \bz'\|_2^2,\]
so each iteration makes nontrivial progress toward learning $\bz$. 
We remark the step-size $\alpha$ can be approximated within a $(1\pm \eps)$ factor using the estimate $P_i$ and $Q_i$ with high probability. 
Now since $\|\bz\|_2^2 \le \poly(n)$, the total number of iterations is upper bounded by $\O{\frac{1}{\eps^2} \log n}$. 

\paragraph{Algorithm analysis.}
We now justify the guarantees of our algorithm. 
To avoid notational clutter, we consider the case where $\calP_i$ is the left-most child of its parent, so that $\bq_0=\mathbf{0}^n$ and thus $\bz+\bq_0=\bz$, so it suffices for $\bv$ to approximate $\bz$. 
In general, we require $\bv$ to approximate $\bz+\bq_0$. 

We first show that $P_i$ in $\EstLevel$ is a good approximation to the quantity $\|\bz+\bq_0-\bv\|_2^2$. 
\begin{lemma}
\label{lem:est:proj}
Let $\bv$ be a fixed iterate vector, i.e., the value of $\bv$ in \Cref{alg:est:level:ltwo}, at a fixed time in the stream, conditioned on the previous times. Suppose that $P_i$ is  produced using sketch $\bB_i$ which is robust for $\O{\frac{1}{\eps^2} \log n}$ adaptive updates.
Then with high probability, 
\[\|\bz+\bq_0-\bv\|_2^2\le P_i\le\left(1+\frac{\eps}{100H}\right)\cdot\|\bz+\bq_0-\bv\|_2^2.\]
\end{lemma}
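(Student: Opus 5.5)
The plan is to reduce the statement to the non-adaptive guarantee of \Cref{thm:ams}, combined with a bounded-computation-paths union bound. The key observation is that, conditioned on the history up to the current time, the vector $\bw := \bz+\bq_0-\bv$ is determined by two things: the adversary's (deterministic, by the minimax reduction) choices and the transcript of previous outputs. We will argue that the only part of that transcript which depends on $\bB_i$ is a short sequence of events.

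First, we identify exactly which outputs depend on $\bB_i$. The sketch $\bB_i$ enters the algorithm only through $P_i$, and through the step size $\alpha$ in \Cref{alg:mainiter:ltwo}, which is computed from $P_i$. The value $P_i$, however, is used in the returned output only at the moments it is combined with $Q_i$ and checked against $A_i$. By construction, $\bv$ changes only when the corrector flags an incorrect estimate, and the number of such flags in the block is $\O{\frac{1}{\eps^2}\log n}$, which is the hypothesis that $\bB_i$ is robust for that many adaptive updates. Between two consecutive updates to $\bv$, the vector $\bz+\bq_0-\bv$ does not change: $\bz$ and $\bq_0$ are fixed once the active block at level $i$ starts, and $\bv$ is frozen. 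So $P_i$ only needs to be evaluated once per iterate, and there are at most $L=\O{\frac{1}{\eps^2}\log n}$ distinct vectors $\bw$ on which $\bB_i$ is ever queried.

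Next, we implement robustness to $L$ adaptive queries in the standard way. We split $\bB_i$ into $L+1$ independent blocks of rows. Each block is an AMS sketch from \Cref{thm:ams} with accuracy $\frac{\eps}{300H}$ and failure probability $\frac{1}{\poly(n)}$, and the $j$-th distinct iterate is evaluated with the $j$-th block. Alternatively, we can use a single sketch with failure probability $\delta=\frac{1}{\poly(n)}\binom{m}{L}^{-1}$ and union bound over the at most $\binom{m}{L}$ computation paths determined by the flag times. Either way, the vector $\bw_j$ is a deterministic function of information independent of the block used to estimate it. That block was drawn independently, and it never influenced anything before $\bw_j$ was fixed, because its only use is to compute $P_i$ for $\bw_j$ itself. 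So \Cref{thm:ams} applies to the fixed vector $\bw_j$, giving $\|\bB_i^{(j)}\bw_j\|_2^2\in(1\pm\frac{\eps}{300H})\|\bw_j\|_2^2$. A union bound over $j\le L$ and over the polynomially many blocks and levels keeps the overall probability high.

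Finally, we get the one-sided form by rescaling. We define $P_i$ as $(1-\frac{\eps}{300H})^{-1}$ times the sketched norm; then $\|\bw\|_2^2\le P_i\le \frac{1+\eps/300H}{1-\eps/300H}\|\bw\|_2^2\le(1+\frac{\eps}{100H})\|\bw\|_2^2$. We also need to handle the truncation of $\bv$ to precision $\frac{1}{\poly(n)}$. This keeps $\bw$ a fixed, well-defined vector with polynomially bounded entries, so the AMS guarantee on streams of length $\poly(n)$ still applies. The main obstacle is the independence claim in the previous step: that the adversary's choice of $\bw_j$ leaks no information about the rows used to evaluate it. This requires checking carefully that $\alpha$ and the flag decisions reveal only information from earlier blocks, and that nothing from $\bB_i$ feeds into other levels except through these $L$ flagged events.
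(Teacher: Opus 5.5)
Your proposal is correct and follows the paper's route: the paper's proof simply combines the AMS guarantee of \Cref{thm:ams} with robustness to $\O{\frac{1}{\eps^2}\log n}$ adaptive queries via a bounded-computation-paths argument. Your other ingredients also come from the paper: the observation that $\bz+\bq_0-\bv$ is frozen between iterate updates is stated in its proof of \Cref{thm:ltwo}, the use of a fresh block of rows per iterate is suggested in its overview, and your rescaling only makes explicit the one-sided normalization the paper assumes. One small fix: in your single-sketch variant, the union bound must range not only over the $\binom{m}{L}$ choices of flag times but also over the $\O{\log n}$-bit values of $P_i$ (and hence of $\alpha$ and the released outputs) at each of the $L$ iterates, giving $\binom{m}{L}\cdot 2^{\O{L\log n}}$ paths as the paper counts; this changes only the constant in $\log\frac{1}{\delta}$.
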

\begin{proof}
Recall that $\bB_i$ is a sketch matrix for $\left(1+\frac{\eps}{100H}\right)$-approximation. By the guarantees of the $\AMS$ algorithm in \Cref{thm:ams} and the robustness of $\AMS$ to $\O{\frac{1}{\eps^2} \log n}$ adaptive updates, by a bounded computation paths argument, c.f., \cite{Ben-EliezerJWY22}, we have
\[\|\bz+\bq_0-\bv\|_2^2\le P_i\le\left(1+\frac{\eps}{100H}\right)\cdot\|\bz+\bq_0-\bv\|_2^2.\]
\end{proof}

\FloatBarrier
Next, we show that $Q_i$ in $\EstLevel$ is a good approximation to $\|\bv+\bq_1\|_2^2$. 
Consequently, $P_i+Q_i$ is a good approximation to $\|\bz+\bq_0-\bv\|_2^2+\|\bv+\bq_1\|_2^2$. 
\begin{lemma}
\label{lem:estlvl}
For each $i\in[H]$, we have that with high probability, the output $P_i+Q_i$ of $\EstLevel(i,\bB_i\bz)$ satisfies
\[\|\bz+\bq_0-\bv\|_2^2+\|\bv+\bq_1\|_2^2\le P_i+Q_i\le\left(1+\frac{\eps}{100H}\right)^{3i-2}\cdot\left(\|\bz+\bq_0-\bv\|_2^2+\|\bv+\bq_1\|_2^2\right).\]
\end{lemma}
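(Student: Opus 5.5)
We argue by induction on the level $i$, using \Cref{lem:est:proj} to control $P_i$ and the recursive structure of \Cref{alg:est:level:ltwo} to control $Q_i$. Write $\gamma=\frac{\eps}{100H}$. The target is the sandwich
\[\|\bz+\bq_0-\bv\|_2^2+\|\bv+\bq_1\|_2^2\le P_i+Q_i\le(1+\gamma)^{3i-2}\left(\|\bz+\bq_0-\bv\|_2^2+\|\bv+\bq_1\|_2^2\right).\]
Both $P_i$ and $Q_i$ are one-sided overestimates, so adding them preserves the lower bound, and the upper factor only needs to be the larger of the two individual factors.

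\textbf{Base case $i=1$.} Here $P_1=0$ and $Q_1=\|\bB_1(\bv+\bq_0+\bq_1)\|_2^2$. Since $\bq_0$ is absorbed into the active query at the lowest level, the target reduces to $\|\bv+\bq\|_2^2$. Applying the AMS guarantee of \Cref{thm:ams} with the bounded computation paths argument, exactly as in \Cref{lem:est:proj}, gives
\[\|\bv+\bq\|_2^2\le Q_1\le(1+\gamma)\|\bv+\bq\|_2^2,\]
and $1+\gamma=(1+\gamma)^{3\cdot 1-2}$. We use the convention, also used in \Cref{lem:est:proj}, that sketch outputs are rescaled by $(1+\gamma)$ so that they are one-sided overestimates.

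\textbf{Inductive step $i>1$.} \Cref{lem:est:proj} already gives
\[\|\bz+\bq_0-\bv\|_2^2\le P_i\le(1+\gamma)\|\bz+\bq_0-\bv\|_2^2.\]
For $Q_i=\EstLevel(i-1,\bB_{i-1}(\bv+\bq_1))$, the output at level $i-1$ is either $P_{i-1}+Q_{i-1}$ or $A_{i-1}$.
\begin{itemize}
\item If it returns $A_{i-1}$, this is a $(1+\gamma)$-approximation of the level-$(i-1)$ target $\|\bv+\bq_1\|_2^2$, which is within the allowed factor.
\item If it returns $P_{i-1}+Q_{i-1}$, the acceptance test guarantees $P_{i-1}+Q_{i-1}\in(1\pm\gamma)^{3(i-1)}A_{i-1}$. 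Combined with $A_{i-1}\in[1,1+\gamma]\cdot\|\bv+\bq_1\|_2^2$, this sandwiches $Q_i$ within a $(1\pm\gamma)^{3i-2}$-type window around $\|\bv+\bq_1\|_2^2$.
\end{itemize}
Here the induction hypothesis is useful. Since $P_{i-1}+Q_{i-1}$ is at least the level-$(i-1)$ decomposition $\|\bv+\bq_1-\bv'\|_2^2+\|\bv'+\bq_1'\|_2^2$, it is not directly comparable to $\|\bv+\bq_1\|_2^2$. The cleaner route is therefore to use only the test together with $A_{i-1}$, which bounds $Q_i$ two-sidedly relative to $\|\bv+\bq_1\|_2^2$ with exponent $3(i-1)+1=3i-2$. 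Summing with the bound on $P_i$ then gives the claimed upper bound.

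\textbf{Main obstacle.} The difficulty is the lower bound in the accepted branch. The test only yields $Q_i\ge(1-\gamma)^{3i-3}\|\bv+\bq_1\|_2^2$, not $Q_i\ge\|\bv+\bq_1\|_2^2$. The plan is to handle this by normalization: state the invariant as $Q_i\in[(1-\gamma)^{3i-3},(1+\gamma)^{3i-2}]\cdot\|\bv+\bq_1\|_2^2$, or rescale the returned estimate by $(1+\gamma)^{3(i-1)}$, so that the stated one-sided form holds with exponent $3i-2$ absorbing the extra factors.

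The other point needing care is robustness. Each $A_i$ and $P_i$ is queried only $\O{B\cdot L+B}$ times per sketch, with $L=\O{\frac{1}{\eps^2}\log n}$ bounded via \Cref{lem:bounded:iterations:ltwo}. The sketches are sized so the bounded computation paths union bound covers $\binom{m}{\O{B^2}}$ transcripts, and a final union bound over all $H$ levels and all times gives the high-probability statement.
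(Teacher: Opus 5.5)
There is a genuine gap, and it is the one you flag as the main obstacle: your opening premise that both $P_i$ and $Q_i$ are one-sided overestimates fails for $Q_i$ when $i>1$, and you never repair it. In the accepted branch, the two-sided test of \Cref{alg:est:level:ltwo} at level $i-1$ only yields $Q_i\ge(1-\gamma)^{3i-3}A_{i-1}\ge(1-\gamma)^{3i-3}\|\bv+\bq_1\|_2^2$. So the lower bound $P_i+Q_i\ge\|\bz+\bq_0-\bv\|_2^2+\|\bv+\bq_1\|_2^2$ is not established.

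The proposed normalization does not fix this. Weakening the invariant proves a different statement. Rescaling changes the algorithm without making the factors cancel:
\begin{itemize}
\item multiplying by $(1+\gamma)^{3i-3}$ still leaves a lower factor $(1-\gamma^2)^{3i-3}<1$ and raises the upper factor to $(1+\gamma)^{6i-5}$;
\item multiplying by $(1-\gamma)^{-(3i-3)}$ makes the upper factor roughly $(1+\gamma)^{6i-5}$ rather than $(1+\gamma)^{3i-2}$.
\end{itemize}
Your own upper bound already uses the exponent $3i-2$ exactly, so there is nothing left over to ``absorb'' anything.

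The paper's proof never looks at the acceptance test. It takes as inductive hypothesis that the returned value satisfies $\|\bv+\bq_1\|_2^2\le Q_i\le(1+\gamma)^{3(i-1)-2}\|\bv+\bq_1\|_2^2$, adds \Cref{lem:est:proj}, and treats $i=1$ as you do. As you correctly observe, that hypothesis is not the level-$(i-1)$ instance of the lemma, which compares $P_{i-1}+Q_{i-1}$ with the decomposed sum rather than with $\|\bv+\bq_1\|_2^2$. The one-sided lower bound on the returned value is what \Cref{invar:output:acc:ptwo} asserts. Its proof reads the test as $A_i\le P_i+Q_i$, which the window $(1\pm\gamma)^{3i}$ does not give. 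So you have found a soft spot that the paper's argument also passes over. The consequence is that the literal one-sided bound does not follow from the algorithm as written, not that normalization rescues it.

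What your route does prove cleanly is the two-sided version
\[(1-\gamma)^{3i-3}D_i\le P_i+Q_i\le(1+\gamma)^{3i-2}D_i,\qquad D_i:=\|\bz+\bq_0-\bv\|_2^2+\|\bv+\bq_1\|_2^2.\]
It is self-consistent across levels. With $T=\|\bz+\bq\|_2^2$, an accepted output at level $i$ lies in $[(1-\gamma)^{3i},(1+\gamma)^{3i+1}]\cdot T$, which is exactly the window needed for $Q_{i+1}$, and a returned $A_i$ lies inside it too.

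It also suffices downstream. A failed test forces $D_i>(1+\gamma)^2T$ or $D_i<(1-\gamma)^3(1+\gamma)T\le(1-\gamma)^2T$, hence $|D_i-T|\ge\gamma T$. That is all \Cref{lem:bounded:iterations:ltwo} uses, and the top-level output remains a $(1\pm\O{\eps})$-approximation. You should state and prove this version rather than claim the one-sided one.
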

\begin{proof}
Suppose $\EstLevel(i-1,\bB_i(\bv+\bq_1))$ outputs $Q_i$ such that $\|\bv+\bq_1\|_2^2\le Q_i\le\left(1+\frac{\eps}{100H}\right)^{3(i-1)-2}\cdot\|\bv+\bq_1\|_2^2$. 
Then it follows by \Cref{lem:est:proj} that with high probability, the output $P_i+Q_i$ of $\EstLevel(i,\bB_{i+1}\bz)$ certainly satisfies
\[\|\bz+\bq_0-\bv\|_2^2+\|\bv+\bq_1\|_2^2\le P_i+Q_i\le\left(1+\frac{\eps}{100H}\right)^{3i-2}\cdot\left(\|\bz+\bq_0-\bv\|_2^2+\|\bv+\bq_1\|_2^2\right).\]
As the above argument serves as an inductive hypothesis for $i\in[H]$, it remains to analyze the base case of $i=1$. 
In this setting, we have $Q_1=\|\bB_1\bv+\bB_1\bq_0+\bB_1\bq_1\|_2^2$. 
By the correctness of $\bB_1$, we have 
\[\|\bv+\bq_0+\bq_1\|_2^2\le Q_1\le\left(1+\frac{\eps}{100H}\right)\cdot\|\bv+\bq_0+\bq_1\|_2^2,\]
with high probability. 
We have $3i-2\ge 1$ for $i=1$ and hence, the base case 
\[\|\bv+\bq_0+\bq_1\|_2^2\le Q_1\le\left(1+\frac{\eps}{100H}\right)^{3i-2}\cdot\|\bv+\bq_0+\bq_1\|_2^2,\]
is satisfied with high probability.
Then by a union bound, the desired claim holds for all $i\in[H]$ by induction. 
\end{proof}

\FloatBarrier

\noindent
Next, we show that conditioned on the correctness of the sketches at each node, we maintain an invariant that the estimator at each level $i$ is always correct, due to a corrector at level $i$ flagging inaccurate estimates. 
\begin{invariant}
\label{invar:output:acc:ptwo}
For any level $i\in[H]$, let $\calP$ be the active block in level $i$ and let $\bB_{i+1}$ be the sketch matrix of the parent of $\calP$. 
Let $\bq$ be the part of the query in $\calP$ and let $\bz$ be previous part. 
Then with high probability 
\[\|\bz+\bq\|_2^2\le\EstLevel(i,\bB_i\bz)\le\left(1+\frac{\eps}{100H}\right)^{3i+1}\cdot\|\bz+\bq\|_2^2.\]
\end{invariant}
\begin{proof}
Suppose the current iterate $\bv$ is not updated to $\bv + \alpha(\bq + \bv)$ for the current query $\bq$. 
Then by the construction of the algorithm, we must have
\[\calA_i(\bz+\bq)\le P_i+Q_i\le\left(1+\frac{\eps}{100H}\right)^{3i}\cdot\calA_i(\bz+\bq).\]
By the correctness of $\calA_i$, we have
\[\|\bz+\bq\|_2^2\le\calA_i(\bz+\bq)\le\left(1+\frac{\eps}{100H}\right)\cdot\|\bz+\bq\|_2^2.\]
Therefore,
\[\|\bz+\bq\|_2^2\le\EstLevel(i,\bB_i\bz)\le\left(1+\frac{\eps}{100H}\right)^{3i+1}\cdot\|\bz+\bq\|_2^2,\]
and so the invariant holds if $\bv$ is not updated at this step.

On the other hand, if the estimator was incorrect, we update iterate $\bv$ to $\bv' = \bv + \alpha(\bq + \bv)$, and return the estimate $A_i$, which is a $\left(1+\frac{\eps}{100H}\right)$-approximation to $\|\bz+\bq\|_2^2$. 
Hence, the invariant follows for all $i\in[H]$. 
\end{proof}
\FloatBarrier

\noindent
We now upper bound the number of times $L_i$ that the iterate $v$ in block $C_{i,j}$ at any level $i\in[H]$ is updated, across all times in the stream. 

\begin{lemma}[Bounded iterations]
\label{lem:bounded:iterations:ltwo}
Let $L_i$ be the number of times that the iterate $\bv$ in a particular block at level $i\in[H]$ is updated and let $\eta=\Theta\left(\frac{\eps}{H}\right)$. 
With high probability, $L_i\le\O{\frac{1}{\eta^2}\log n}$ for all $i\in[H]$ and for all times in the stream, for $m\le\poly(n)$. 
\end{lemma}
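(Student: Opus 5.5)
We use the potential $\Phi(\bv)=\|\bw-\bv\|_2^2$, where $\bw:=\bz+\bq_0$ is the (fixed within the active block) vector the iterate is meant to learn. We show that every update of $\bv$ shrinks $\Phi$ by a factor $(1-c\eta^2)$ for an absolute constant $c>0$. We then bound the initial and final values of $\Phi$ polynomially in $n$; this gives $L_i\le \O{\frac{1}{\eta^2}\log n}$. Throughout we condition on the high-probability event that all sketches are correct, namely \Cref{lem:est:proj}, \Cref{lem:estlvl}, \Cref{invar:output:acc:ptwo} and the guarantee on $A_i$. This conditioning is legitimate because each sketch answers only $\poly(\frac{1}{\eps},\log n)$ adaptive queries, and the bounded computation paths argument of \cite{Ben-EliezerJWY22} applies. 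The argument is circular only in the benign sense that robustness assumes the bound on $L_i$. We handle this by induction over time: at the first time the bound would be violated, all prior queries were within budget, so the sketches were still correct at that time.

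\textbf{Step 1 (correlation from a flagged estimate).} Fix an update time and write $\bu=\bv+\bq$ with $\bq=\bq_0+\bq_1$ as in \Cref{alg:mainiter:ltwo}. Then $\bw+\bq_1=(\bw-\bv)+\bu$, so
\[\|\bw+\bq_1\|_2^2=\|\bw-\bv\|_2^2+\|\bu\|_2^2+2\langle\bw-\bv,\bu\rangle.\]
By \Cref{lem:estlvl}, $P_i+Q_i$ approximates $\|\bw-\bv\|_2^2+\|\bu\|_2^2$ to within $(1+\eta)^{3i-2}$. By correctness, $A_i$ approximates $\|\bw+\bq_1\|_2^2$ to within $(1+\eta)$. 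If the test fails, the gap $(1+\eta)^{3i}$ against $(1+\eta)^{3i-1}$ forces the following: the two true quantities differ by at least $\Omega(\eta)\cdot(\|\bw-\bv\|_2^2+\|\bu\|_2^2)$, and the sign of the difference equals $\sigma$. Hence
\[\sigma\langle\bw-\bv,\bu\rangle\ge \Omega(\eta)\,(\|\bw-\bv\|_2^2+\|\bu\|_2^2)\ge\Omega(\eta)\,\|\bw-\bv\|_2\|\bu\|_2\]
by AM--GM. The first inequality also requires a case check: if the true gap were smaller, the approximation slack alone could not push the ratio outside the test window. This is where the exponent $3i$ in the test is used.

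\textbf{Step 2 (progress).} With $\alpha=\frac14\sqrt{P_i/Q_i}\,\sigma\eta$, we have $|\alpha|=\Theta(\eta)\|\bw-\bv\|_2/\|\bu\|_2$ up to $(1+\eta)^{O(i)}=O(1)$ factors, since $\eta H=O(\eps)$. Expanding,
\[\|\bw-\bv-\alpha\bu\|_2^2=\Phi-2\alpha\langle\bw-\bv,\bu\rangle+\alpha^2\|\bu\|_2^2.\]
The cross term contributes at least $\Omega(\eta^2)\Phi$. The quadratic term is at most $\frac{1}{16}\eta^2\Phi\cdot O(1)$. The constant $\frac14$ is chosen so that the gain dominates, which gives $\Phi'\le(1-c\eta^2)\Phi$. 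Truncating $\bv$ to precision $1/\poly(n)$ changes $\Phi$ by only an additive $1/\poly(n)$.

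\textbf{Step 3 (counting).} Initially $\bv=\mathbf 0$, so $\Phi=\|\bw\|_2^2\le\poly(n)$, because $m=\poly(n)$ and the updates are bounded. For the lower end, note that if $\Phi$ is tiny, e.g.\ below $1/\poly(n)$, a failed test is impossible. Indeed, Step~1 requires $\Omega(\eta)(\Phi+\|\bu\|_2^2)\le 2|\langle\bw-\bv,\bu\rangle|\le 2\sqrt{\Phi}\|\bu\|_2$, which forces $\|\bu\|_2^2=O(\Phi/\eta^2)$. So the whole norm is tiny. But nonzero integer-scaled vectors have norm at least $1/\poly(n)$, and when $\bw+\bq_1=0$ the estimate holds exactly. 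Hence updates stop once $\Phi<1/\poly(n)$, and the number of updates is at most $\log_{1/(1-c\eta^2)}\poly(n)=\O{\frac{1}{\eta^2}\log n}$. The main obstacle is Step~1: extracting a correlation lower bound of order $\eta$ together with the correct sign, uniformly in the level $i$, from the nested $(1\pm\eta)^{3i}$ approximation windows. That bookkeeping must be done carefully so that the constants do not blow up with $i\le H$.
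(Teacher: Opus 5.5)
Your Steps 1--2 follow the paper's argument. The potential is $\|\bz-\bv\|_2^2$, the correlation bound comes from a failed test via the expansion plus AM--GM, and a step of size $\Theta(\eta)\|\bz-\bv\|_2/\|\bv+\bq\|_2$ gives a $(1-\Omega(\eta^2))$ contraction. Your window bookkeeping is more careful than the paper's, which silently uses a $(1+\eta)$ window where \Cref{lem:estlvl} only gives $(1+\eta)^{3i-2}$.

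\textbf{The gap is the lower end in Step~3.} Your claim that ``when $\mathbf{w}+\bq_1=\mathbf{0}$ the estimate holds exactly'' is false.
\begin{itemize}
\item In that case the target is $0$, so $A_i=0$.
\item Since $\bv+\bq_1=\bv-\mathbf{w}$, we get $P_i+Q_i\ge\|\mathbf{w}-\bv\|_2^2+\|\bv+\bq_1\|_2^2=2\Phi$.
\item So the test fails for \emph{every} $\Phi>0$, however small.
\end{itemize}
Hence ``updates stop once $\Phi<1/\poly(n)$'' does not follow, and your count has no floor.

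It is worse than a missing floor. Here the update is a pure rescaling, $\mathbf{w}-\bv'=(1+\alpha)(\mathbf{w}-\bv)$ with $\alpha\approx-\eta/4$, so in exact arithmetic $\Phi$ never reaches $0$. A stream that deletes everything and then repeatedly inserts and deletes one item would charge one update per return to $\mathbf{0}$ within a single block. Rounding $\bv$ to a grid of spacing $\gamma$ does not automatically fix this. For example, round-to-nearest maps $(1-\eta/4)k\gamma$ back to $k\gamma$ whenever $k<2/\eta$, so the iterate can stall at a positive $\Phi$.

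The paper's proof does not handle this either; it only appeals to $\|\bz\|_2^2\le\poly(n)$ and truncation. So you need an extra ingredient, not something the paper supplies. Two options:
\begin{itemize}
\item Let the corrector return $A_i$ without updating $\bv$ when $A_i=0$. This output is correct, since $A_i\in[T,(1+\eta)T]$ forces $A_i=0$ exactly when the target is $\mathbf{0}$.
\item Assume, as \Cref{thm:tri} does, that the target value is bounded below.
\end{itemize}

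Once zero targets are excluded, your small-$\Phi$ argument is the right one. However, the two $1/\poly(n)$'s you use are not interchangeable. The truncation error must be small compared with $\eta^{O(1)}$ times the smallest nonzero target norm. At lower levels the target contains the parent's truncated iterate, so the precisions have to be calibrated across levels.

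Separately, fix the definition of $\bu$. With $\bu=\bv+\bq_0+\bq_1$ you get $(\mathbf{w}-\bv)+\bu=\bz+2\bq_0+\bq_1\neq\mathbf{w}+\bq_1$. You need $\bu=\bv+\bq_1$, the vector whose norm $Q_i$ estimates in \Cref{lem:estlvl}. Step~2 then needs the update direction $\bv+\bq_1$. This agrees with the $\bv+\bq$ of \Cref{alg:mainiter:ltwo} only in the case $\bq_0=\mathbf{0}$, which is the case the paper restricts to.
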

\begin{proof}
For any fixed level $i\in[H]$ and for any fixed time in the stream, let $\calP_i$ and $\calP_{i+1}$ be the active blocks at levels $i$ and $i+1$ respectively. 
Let $\bz$ be the frequency vector corresponding to all the updates prior to $\calP_{i+1}$. 
Conditioned on the correctness guarantees of $\EstLevel$ in \Cref{lem:estlvl}, we have that 
\begin{align*}
\|\bz - \bv\|_2^2 + \|\bv + \bq \|_2^2 \leq \EstLevel(i, \bB_{i+1}\bz) \leq (1+\eta) \cdot \left( \|\bz - \bv\|_2^2 + \|\bv + \bq \|_2^2\right).
\end{align*}
If the iterate $\bv$ is updated, then it must be that the estimate is inaccurate, so that
\[\left\lvert\|\bz-\bv\|_2^2+\|\bv+\bq\|_2^2-\|\bz+\bq\|_2^2\right\rvert\ge\eta\cdot\|\bz+\bq\|_2^2.\]
By expanding the left hand side, we get
\begin{align*}
|\langle \bz - \bv, \bv + \bq \rangle| &\geq \frac{\eta}{2} \|\bz + \bq\|_2^2\\ &=\frac{\eta}{2}\left(\|\bz - \bv\|_2^2 + 2 \langle \bz - \bv, \bv + \bq \rangle + \|\bv + \bq \|_2^2 \right).
\end{align*}
Suppose $\langle \bz -\bv, \bv + \bq \rangle \geq \frac{\eta}{2}\cdot \|\bz + \bq\|_2^2$. By rearranging the inequality above, we get 
\begin{align*}
(1-\eta) \cdot\langle \bz - \bv, \bv + \bq \rangle &\geq \frac{\eta}{2} \left(\|\bz - \bv\|_2^2 + \|\bv + \bq\|_2^2 \right) \\
& \geq \eta\cdot\|\bz - \bv\|_2\cdot \|\bv + \bq\|_2 
\end{align*}
Therefore, it follows that $\langle \bz - \bv, \bv + \bq \rangle \geq \eta\cdot\|\bz - \bv\|_2\cdot\|\bv + \bq\|_2$. 
Then, $\MaintainIter$ will update $\bv$ to $\bv' = \bv + \alpha(\bv + \bq)$ for some step-size parameter $\alpha \in [-1,1]$. 
To upper bound the total number of times this occurs, we will consider the progress measure $\|\bz - \bv \|_2^2 - \|\bz - \bv'\|_2^2$. 
First, we have that 
\begin{align*}
\|\bz - \bv'\|_2^2 = \|\bz - \bv - \alpha(\bv+\bq)\|_2^2 = \|\bz - \bv\|_2^2 - 2\alpha \langle \bz - \bv, \bv + \bq \rangle + \alpha^2\cdot\|\bv + \bq\|_2^2.
\end{align*}
So, it follows that 
\begin{align*}
\|\bz - \bv\|_2^2 - \|\bz - \bv'\|_2^2 &\geq 2\alpha \langle \bz - \bv, \bv + \bq \rangle - \alpha^2 \cdot\|\bv + \bq\|_2^2 \\ 
&\geq 2 \alpha \cdot \eta \|\bz - \bv\|_2 \cdot\|\bv + \bq \|_2 - \alpha^2\cdot \|\bv + \bq\|_2^2.
\end{align*}
For any $\alpha = \Theta(\eta) \cdot \frac{\|\bz - \bv\|_2}{\|\bv + \bq\|_2}$ where the constant ranges in $\left[\frac{1}{10},\frac{1}{2}\right)$, we get that 
\begin{align*}
\|\bz - \bv\|_2^2 - \|\bz - \bv'\|_2^2 \geq\frac{\eta^2}{100}\cdot\|\bz - \bv\|_2^2.
\end{align*}
Similarly, suppose that $\langle \bz - \bv, \bv + \bq \rangle \leq -\frac{\eta}{2}\cdot\|\bz + \bq\|_2^2$. 
Then, we see that 
\begin{align*}
\langle \bz - \bv, \bv + \bq \rangle &\leq -\frac{\eta}{2} \left(\|\bz - \bv\|_2^2 + 2 \langle \bz - \bv, \bv + \bq \rangle + \|\bv + \bq \|_2^2 \right).
\end{align*}
So, it follows that $(1+\eta) \cdot \langle \bz - \bv, \bv + \bq \rangle \leq  -\eta \cdot\|\bz - \bv\|_2\cdot\|\bq + \bv\|_2$, e.g. 
\[- \langle \bz - \bv, \bv + \bq \rangle \geq  \frac{\eta}{2}\cdot \|\bz - \bv\|_2\cdot\|\bq + \bv\|_2.\] 
As before, we consider the progress measure $\|\bz - \bv\|_2^2 - \|\bz - \bv'\|_2^2$: 
\begin{align*}
\|\bz - \bv\|_2^2 - \|\bz - \bv'\|_2^2 &\geq 2\alpha \langle \bz - \bv, \bv + \bq \rangle - \alpha^2\cdot \|\bq + \bv\|_2^2 .
\end{align*}
For $\alpha = -\Theta(\eta) \cdot \frac{\|\bz - \bv\|_2}{\|\bq + \bv\|_2}$ with any constant in the range $\left[\frac{1}{10},\frac{1}{2}\right)$, we have
\[\|\bz - \bv\|_2^2 - \|\bz - \bv' \|_2^2 \geq \frac{\eta^2}{100}\cdot\|\bz - \bv\|_2^2.\]
Thus, we have shown that $\|\bz - \bv'\|_2^2$ decreases by a factor of $(1-\O{\eta^2})$ in both cases. 
Observe that since $\|\bz\|_2^2\le\poly(n)$, then this holds even when the image of $\bv'$ is truncated to a sufficiently fine precision $\frac{1}{\poly(n)}$. 
Moreover, since $\bv = \mathbf{0}^n$ originally and $\|\bz \|_2^2 \leq \poly(n)$, we get that $\bv$ will be updated at most $\O{\frac{1}{\eta^2} \log n}$ times.
\end{proof}
\noindent
Finally, we now show correctness of \Cref{fig:alg:ltwo}, which gives our adversarially robust algorithm for $(1+\eps)$-approximation of the $F_2$ moment on insertion-deletion streams in polylogarithmic space. 
\thmltwo*
\begin{proof}
First, observe that a randomized adversary is a distribution over sequences of possible inputs and the probability that the attack succeeds is at most the maximum of the probabilities that any particular input sequence succeeds. 
Thus, without loss of generality, it suffices to consider a deterministic adversary. 

We first track the adaptive interactions with the linear sketches for the corrector. 
Note that at each time, $\MaintainIter$ asks each $F_2$ linear sketch $\bB_i$ whether to update the iterate $\bv$ with the query $\bq$. 
Therefore, the transcript from $\bB_i$ can be viewed as a sequence of symbols, either $\bot$ or $\top$, so that an output of $\bot$ from the sketching matrix indicates that no action should be taken, while an output of $\top$ indicates that the iterate should be updated. 
Suppose $\top$ occurs at most $L$ times in a single block; since there are $B$ blocks in the level below, the corrector will be used to flag at most $B \cdot L$ queries overall.

This latter action also triggers a linear sketch at level $i-1$ to maintain $\bB_{i-1}\bv'$, where $\bv'$ is the updated iterate $\bv'=\bv+\eps(\bq+\bv)$, releases the estimate produced by $\bB_i$, but no other information about $\bB_i$ is revealed. 
Moreover, the linear sketch outputs an estimate for $F_2$ moment at the beginning of each of the $B$ blocks, encoded in $C_1\log n$ bits, for some fixed constant $C_1>0$. 
We remark that the entries of the truncated image of $\bv$ must be multiples of the precision $\frac{1}{\poly(n)}$ with magnitude at most $\poly(n)$, since otherwise $\bv$ could not be a good approximation to $\bz$, which would violate the progress analysis in \Cref{lem:bounded:iterations:ltwo}. 
Hence, the total number of possible output streams induced by $\bB_{i}$ is at most $\binom{m}{B \cdot L}\cdot\left(2^{(C_1\log n)}\right)^{B \cdot L + B}$. 
Since the adversary is deterministic, then any possible input stream must be induced by one of these possible output streams. 
By setting the total failure probability $\delta$ so that
\[\delta\le\frac{1}{(nm)^3}\cdot\left(\binom{m}{B \cdot L}\cdot\left(2^{(C_1\log n)}\right)^{B \cdot L + B}\right)^{-1},\]
then by a union bound over all possible input streams from the adversary, the sketching matrix $\bB_i$ is correct with probability $1-\frac{1}{(nm)^3}$.  
For $m=\poly(n)$, it suffices to set $\log\frac{1}{\delta}=\O{(B \cdot L + B)\cdot\log n}$. 
This shows robustness, after which correctness holds from \Cref{invar:output:acc:ptwo}. 

We next track the number of adaptive interactions to the estimator, which consists of two components $P_i$ and $Q_i$. 
While the correctness of $Q_i$ follows from recursion, we again use a bounded computation paths argument to show correctness of $P_i$. 
To that end, we first remark that each sketch matrix $\bB_i$ has entries rounded to $\O{\log n}$ bits and thus each output $P_i$ can be represented in $\O{\log n}$ bits. 
The number of adaptive interactions with the sketch $\bB_i$ is at most the number of updates to the iterate $\bv$, which is at most $L$. 
Thus the number of possible computation paths is at most $\binom{m}{L}\cdot\left(2^{(C_2\cdot\log n)}\right)^L$ for some constant $C_2>0$ and so it again suffices to set $\log\frac{1}{\delta}=\O{(L+B)\cdot\log n}$. 
Specifically, each of the adaptive queries to $P_i$ correspond to different values of the iterate $\bv$; all other queries to $P_i$ over the course of the stream are to the same values of the iterate $\bv$ between two updates of $\bv$ and thus correct conditioned on the correctness of the queries for the values of $\bv$. 

\paragraph{Space complexity.}
It thus remains to analyze the space complexity. 
For a fixed tree and for each fixed level $i\in[H]$ within the tree, we maintain $B$ sketching matrices corresponding to the $B$ nodes for each active block in level $i+1$. 
Each sketch requires accuracy $\left(1+\eta\right)$ for $\eta = \frac{\eps}{100 H}$, and failure probability $\delta$, where from before, we have $\log\frac{1}{\delta}=\O{(L+B)\cdot B\cdot\log n}$. 
By \Cref{lem:bounded:iterations:ltwo}, we have $L\le\frac{C_3}{\eta^2}\log n$ for some fixed constant $C_3>0$ with high probability, conditioned on the correctness of the sketch. 
We require $B^H\ge m$ and $B\ge L$, so it suffices to set $B=\O{\frac{1}{\eta^2}\log n}$. 
By \Cref{thm:ams}, each linear sketch has sketching dimension $\tO{\frac{1}{\eta^2}\log\frac{1}{\delta}}$. 
Since 
\[\log\frac{1}{\delta}=\O{(B \cdot L + B)\cdot\log n}=\O{\frac{1}{\eta^4}\log^3 n},\]
this corresponds to sketching dimension $\tO{\frac{1}{\eta^6}\log^3 n}$ and therefore each sketching matrix uses $\tO{\frac{1}{\eta^6}\log^4 n}$ bits of space.  
There are $B=\O{\frac{1}{\eta^2}\log n}$ sketching matrices per level across $H=\O{\log n}$ levels in the recursion, and moreover, it suffices to maintain only one active sketch per level at any given time. 
Hence the total space is $\O{\frac{1}{\eta^6}\log^5 n}=\tO{\frac{1}{\eps^6}\log^{11} n}$ bits of space.  
\end{proof}

\begin{remark}
As an alternative to the computation paths argument used above, we can ensure that each sketch $\bB_i$ is robust to $B\cdot L + B= \O{B^2}$ adaptive interactions by applying the differential privacy-based framework in \cite{HassidimKMMS20} inside of each block $C_{i,j}$. 
By Theorem 3.4 in their work, this framework yields space 
\[\O{\frac{1}{\eta^2} \log\left(\frac{1}{\delta}\right) \log n \cdot \sqrt{(B \cdot L + B) \cdot \log \left(\frac{1}{\delta}\right)} \cdot \log\left(\frac{m}{\eta \delta}\right)},\]
for each sketch $\bB_i$. 
Since we maintain a single sketch in each of the $H = \O{\log n}$ levels at any given time, after substituting $B = \frac{1}{\eta^2} \log n$, $\eta = \frac{\eps}{H}$, and $\delta = \frac{1}{(nm)^3}$ for $m \leq \poly(n)$, the total space of our algorithm becomes:
\[\O{\frac{1}{\eps^4} \cdot \log^{9.5} n + \frac{1}{\eps^4} \log^{8.5} n \cdot \log\left(\frac{\log n}{\eps}\right)}\]
\end{remark}
\section{\texorpdfstring{$L_2$}{L2} Heavy-Hitters}
Next, we show a natural extension of our robust $F_2$ estimation algorithms to robustly compute the $L_2$ heavy hitters of the frequency vector at all times in the stream. 
Recall that an algorithm solves the $L_2$ heavy hitters problem if it returns coordinate $i$ whenever $|x_i|\geq \eps \|\bx\|_2$, and any coordinate $i$ with $|x_i|\leq \frac{\eps}{2} \|\bx\|_2$ is not returned. 
Note that as a direct corollary of our robust $L_2$ heavy hitters algorithm, we also obtain an adversarially robust algorithm that recovers the $L_p$ heavy hitters for all $p \leq 2$.

Suppose without loss of generality that $x_i>0$; the algorithm is symmetric for $x_i<0$. 
Intuitively, the main idea is to robustly approximate the $L_2$ norm $\|\bx \|_2\leq X \leq \left(1 + \frac{\eps}{100}\right)\|\bx\|_2$; then, to identify heavy hitters, we \textit{deterministically} iterate through each coordinate $i \in [n]$ of the frequency vector, add mass $\frac{1}{2} \eps \cdot X \cdot \be_i$ to coordinate $i$, and recompute the new $L_2$ norm $S_i$. 
If the approximate $F_2$ moment $S_i^2$ increases by at least $1.15 \cdot \eps^2 \cdot \|\bx\|_p^2$, we report $i$ as a heavy hitter, and move on to the next index. 
This algorithm is formalized in \Cref{fig:alg:heavy:hitter}.

\begin{figure*}[!htb]
\begin{mdframed}
\textbf{Algorithm}:
\begin{enumerate}
\item 
Let $\calA$ be a robust $L_2$ norm estimation algorithm with accuracy $(1+\O{\eps^2})$ 
\item 
Let $\bx^{(t)}$ denote the frequency vector for the stream at time $t$, and let $\|\bx^{(t)}\|_2 \leq X \leq \left(1+\frac{\eps^2}{100}\right) \|\bx^{(t)}\|_2$ be an estimate from $\calA$
\item
Initialize $H_t\gets\emptyset$
\item 
For $i \in [n]$:
\begin{enumerate}
\item 
$\bv_i\gets\frac{1}{2}\eps \cdot X \cdot \be_i$ for the elementary vector $\be_i$
\item 
Let $S_i \gets \calA(\bx^{(t)} + \bv_i)$ and $T_i\gets\calA(\bx^{(t)}-\bv_i)$
\item 
If $S_i^2 - X^2 \geq 1.15 \eps^2 \cdot X^2$ or $T_i^2 - X^2 \geq 1.15 \eps^2 \cdot X^2$, set $H \gets H \cup \{i\}$
\item 
Reset coordinate $i$, by inserting $-\bv_i$ or $\bv_i$ accordingly
\end{enumerate}
\item 
Return the set $H_t$ of heavy hitters at step $t$.
\end{enumerate}
\end{mdframed}
\caption{Algorithm for adversarially robust heavy-hitters on insertion-deletion streams.}
\label{fig:alg:heavy:hitter}
\end{figure*}

To justify the correctness of our algorithm, we first show that after adding roughly $\frac{\eps}{2}\cdot\|\bx\|_2$ to an $\eps$-heavy-hitter coordinate $i\in[n]$ will result in the overall $F_2$ moment increasing by at least a certain amount, while adding the same quantity to a coordinate $i\in[n]$ that is not $\frac{\eps}{2}$-heavy $i$ will result in the overall $F_2$ moment increasing by at most a smaller certain amount. 
These two amounts have a significant gap that a sufficiently accurate $F_2$ moment estimation can distinguish between the two cases, thereby certifying whether $i$ is at least $\eps$-heavy or at most $\frac{\eps}{2}$-heavy. 
\begin{lemma}
\label{lem:heavy:hitters}
Let $\bx\in\mathbb{R}^n$ be a frequency vector and suppose $\|\bx\|_2\le X\le\left(1+\frac{\eps^2}{100}\right)\cdot\|\bx\|_2$. 
Let $i\in[n]$ be fixed and let $Z$ be a $\left(1+\O{\eps^2}\right)$-approximation to $\left\|\bx+\frac{1}{2}\eps\cdot X\cdot \be_i\right\|_2^2$. 
Then with high probability:
\begin{enumerate}
\item 
If $x_i\ge\eps\cdot\|\bx\|_2$, then $Z-X^2>1.15\cdot\eps^2\cdot X^2$. 
\item
If $x_i\le\frac{\eps}{2}\cdot\|\bx\|_2$, then $Z-X^2<1.15\cdot\eps^2\cdot X^2$. 
\end{enumerate}
\end{lemma}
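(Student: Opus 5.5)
The plan is a direct computation: expand the squared norm of the perturbed vector exactly, then absorb the two sources of slack (the estimate $X$ of $\|\bx\|_2$ and the $(1+\O{\eps^2})$ error of $Z$) into a constant-size gap around the threshold $1.15\cdot\eps^2X^2$. No probabilistic argument is needed beyond the event that $X$ and $Z$ satisfy their stated approximation guarantees. That event is what ``with high probability'' refers to; the robustness of $\calA$ is already established. Fix the hidden constant in $Z$ to be a small absolute constant $c$, say $c\le\frac{1}{100}$, so that
\[(1-c\eps^2)\left\|\bx+\tfrac{1}{2}\eps X\be_i\right\|_2^2\le Z\le(1+c\eps^2)\left\|\bx+\tfrac{1}{2}\eps X\be_i\right\|_2^2 .\]

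The first step is the identity
\[\left\|\bx+\tfrac{1}{2}\eps X\be_i\right\|_2^2=\|\bx\|_2^2+\eps X x_i+\tfrac{1}{4}\eps^2X^2 .\]
From the hypothesis on $X$ we also have
\[\frac{X^2}{(1+\eps^2/100)^2}\le\|\bx\|_2^2\le X^2,\qquad\text{so}\qquad \|\bx\|_2^2\ge\left(1-\tfrac{\eps^2}{50}\right)X^2 .\]
Throughout, every error term will be bounded as a multiple of $\eps^2X^2$.

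In case 1, $x_i\ge\eps\|\bx\|_2\ge\eps X/(1+\eps^2/100)$. Hence
\[\eps X x_i\ge\left(1-\tfrac{\eps^2}{100}\right)\eps^2X^2,\]
and the true perturbed norm is at least $X^2+\left(\tfrac54-\tfrac{1}{50}-\tfrac{\eps^2}{100}\right)\eps^2X^2$. Since this quantity is at most $2X^2$, multiplying by $(1-c\eps^2)$ loses at most $2c\eps^2X^2$. This gives $Z-X^2\ge(1.25-0.02-0.01-0.02)\,\eps^2X^2>1.15\cdot\eps^2X^2$.

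In case 2, we only use the upper bound $x_i\le\frac{\eps}{2}\|\bx\|_2\le\frac{\eps}{2}X$; this also covers negative $x_i$. Together with $\|\bx\|_2^2\le X^2$, the true perturbed norm is at most $X^2+\tfrac34\eps^2X^2$. Multiplying by $(1+c\eps^2)$ gives $Z-X^2\le(0.75+2c)\,\eps^2X^2<1.15\cdot\eps^2X^2$.

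There is no real obstacle here. The only point needing care is bookkeeping in case 1. There, both the $(1+\eps^2/100)$ slack in $X$ (which enters twice, through $\|\bx\|_2^2$ and through $x_i$) and the multiplicative error of $Z$ must fit inside the $0.1\,\eps^2X^2$ margin between $1.25$ and $1.15$. This is exactly why the lemma requires $X$ and $Z$ to be accurate to $1+\O{\eps^2}$ rather than $1+\O{\eps}$. I will state the needed constant $c$ explicitly so that the inequalities close.
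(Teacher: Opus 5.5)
Your proposal is correct and follows essentially the paper's route: you expand $\|\bx+\frac12\eps X\be_i\|_2^2-\|\bx\|_2^2=\eps Xx_i+\frac14\eps^2X^2$, obtain roughly $1.25\,\eps^2$ versus $0.75\,\eps^2$ in the two cases, and absorb the $(1+\O{\eps^2})$ slack of $X$ and $Z$ into the gap around $1.15$; your bookkeeping against $X^2$ is more careful than the paper's, which compares against $\|\bx\|_2^2$ and only asserts that the gap is distinguishable. One small slip: in case 1 the lower bound is only at most $(1+\eps/2)^2X^2<2.25X^2$ (not $2X^2$) for $\eps$ near $1$, so the loss is at most $2.25c\,\eps^2X^2$, which with $c\le\frac{1}{100}$ still gives $Z-X^2\ge 1.19\,\eps^2X^2$.
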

\begin{proof}
Let $\bx_{-i}$ be the vector $\bx$ with the $i$-entry set to zero, i.e., $\bx_{-i}:=\bx-x_i\cdot\be_i$ for the elementary vector $\be_i$. 
Then 
\[\|\bx\|_2^2=\|\bx_{-i}\|_2^2+(x_i)^2\]
and
\[\left\|\bx+\frac{1}{2}\eps\cdot X\cdot \be_i\right\|_2^2=\|\bx_{-i}\|_2^2+\left(\frac{1}{2}\eps\cdot X+x_i\right)^2.\]
Hence,
\begin{align*}
\left\|\bx+\frac{1}{2}\eps\cdot X\cdot\be_i\right\|_2^2-\|\bx\|_2^2&=\left(\frac{1}{2}\eps\cdot X+x_i\right)^2-(x_i)^2.
\end{align*}
For $x_i\le\frac{\eps}{2}\cdot\|\bx\|_2$, we have that
\[\left(\frac{1}{2}\eps\cdot X+x_i\right)^2-(x_i)^2\le(1.05^2-0.5^2)\cdot\eps^2\cdot\|\bx\|_2^2.\]
On the other hand, by convexity for $x_i\ge\eps\cdot\|\bx\|_2$, we have that 
\[\left(\frac{1}{2}\eps\cdot X+x_i\right)^2-(x_i)^2\ge(1.5^2-1)\cdot\eps^2\cdot\|\bx\|_2^2.\]
Note that there is a constant gap between these two cases, and thus these two cases can be distinguished by a $\left(1+\O{\eps^2}\right)$-approximation $Z$ to $\left\|\bx+\frac{1}{2}\eps\cdot X\cdot \be_i\right\|_2^2$. 
In particular, these two cases can be distinguished by the threshold $1.15\cdot Z$ for sufficiently small constant $\eps\in(0,1)$. 
\end{proof}
We now show the correctness of our adversarially robust heavy-hitter algorithm on insertion-deletion streams. 
\thmhh*
\begin{proof}
Since the algorithm maintains a robust $F_p$ moment estimation algorithm and makes deterministic queries to determine the set of heavy hitters at each time, robustness of this algorithm directly follows from \Cref{thm:ltwo}.
Additionally, by \Cref{lem:heavy:hitters}, a $\left(1 + \frac{\eps^2}{100}\right)$-approximation algorithm for $F_2$ moment estimation can distinguish between the case that $x_i \geq \eps \|\bx\|_2$ and $x_i \leq \frac{\eps}{2} \|\bx\|_2$. 
Thus, the total space bounds follow from the aforementioned theorem.
Finally, we remark that for $p\le 2$, the $\eps$-$L_p$ heavy hitters are a subset of the $\eps$-$L_2$-heavy hitters, as $|x_i|\ge\eps\cdot\|\bx\|_p$ implies $|x_i|^2\ge\eps^2\cdot\|\bx\|_p^2\ge\eps^2\cdot\|\bx\|_2^2$, since $p\le 2$. 
Thus, to identify the $\eps$-$L_p$ heavy hitters, we can simply run the $\eps$-$L_2$ heavy-hitter algorithm. 
\end{proof}

\section{Approximate Triangle Inequality}
In this section, we extend our robust framework for $F_2$ estimation to a broader class of functions $\calF$ that satisfy an approximate triangle inequality. 
Recall that for a constant $\beta>0$, a function $\calF$ satisfies a $\beta$-approximate triangle inequality if 
\[\calF(\bx-\bz)\le\beta\cdot(\calF(\bx-\by)+\calF(\by-\bz)),\]
for all vectors $\bx,\by,\bz\in\mathbb{R}^n$. 

As before, we use the framework consisting of an estimator, a corrector, and a learner, and the algorithm maintains an iterate $\bz'$ that approximates the previous stream vector $\bz$. 
Each incoming query $\bq$ is handled by estimating $P_i\approx\calF(\bz-\bz')$ and $Q_i\approx\calF(\bz'+\bq)$, the former using a non-adaptive sketch, and the latter using a recursive procedure.  
The estimator then outputs sum of these estimates $P_i+Q_i$, which by triangle inequality is at least $\calF(\bz+\bq)$. 
Now, if $P_i + Q_i$ is significantly larger than $\calF(\bz+\bq)$, then by the approximate triangle inequality, the iterate can be updated to a new vector $\bz''=-\bq$ in a way that provably reduces the distance to the true vector under $\calF$. 
Specifically, \Cref{lem:precond:tri} formalizes how such an update guarantees that $\calF(\bz-\bz'')$ is a constant factor smaller than $\calF(\bz-\bz')$, which measures progress toward learning $\bz$.  
As a result, \Cref{lem:bounded:iterations:tri} shows that the algorithm only performs $\O{\log n}$ updates per node $C_{i,j}$ level in level $i$ of the recursion. 
Finally, we again use a recursive structure to upper bound the number of adaptive interactions in each level. 
The important subroutine $\EstLevel$ appears in \Cref{alg:est:level:tri}; it is then encapsulated within the same tree structure as \Cref{fig:tree}. 

We begin by examining the precise condition under which the algorithm's output $\calF(\bz - \bz') + \calF(\bz' + \bq)$ is not an accurate estimate of $\calF(\bz + \bq)$. 
\begin{lemma}[Precondition triggering]
\label{lem:precond:tri}
Let $\calA$ be an algorithm that outputs a $\kappa$-approximation to a function $\calF$ that satisfies $\beta$-triangle inequality and let $Z=\calA(\bz-\bz')+\calA(\bz'+\bq)$. 
Suppose $Z>\kappa^2\cdot\calA(\bz+\bq)$. 
Then with high probability, for $\bz''=-\bq$, we have 
\[\calF(\bz-\bz'')\le\frac{\beta+1}{\kappa-\beta}\cdot\calF(\bz-\bz').\]
\end{lemma}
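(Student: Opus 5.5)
Since $\bz''=-\bq$, we have $\bz-\bz''=\bz+\bq$. The claim is therefore that $\calF(\bz+\bq)$ is at most a $\frac{\beta+1}{\kappa-\beta}$ fraction of $\calF(\bz-\bz')$. The plan is to turn the failed-verification event $Z>\kappa^2\cdot\calA(\bz+\bq)$ into an inequality between the three true quantities. We abbreviate
\[a=\calF(\bz-\bz'),\qquad b=\calF(\bz'+\bq),\qquad c=\calF(\bz+\bq).\]
First, we condition on the high-probability event that the three invocations $\calA(\bz-\bz')$, $\calA(\bz'+\bq)$ and $\calA(\bz+\bq)$ are all correct, via a union bound. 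We fix the normalization of a $\kappa$-approximation as $\calF(\by)\le\calA(\by)\le\kappa\cdot\calF(\by)$, as in the $F_2$ section where estimates are one-sided overestimates. Under this convention $Z\le\kappa(a+b)$ and $\kappa^2\calA(\bz+\bq)\ge\kappa^2 c$. The hypothesis $Z>\kappa^2\calA(\bz+\bq)$ then yields $\kappa(a+b)>\kappa^2c$, that is, $a+b>\kappa c$.

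Second, we upper bound $b$ using the $\beta$-approximate triangle inequality together with the symmetry of $\calF$:
\[b=\calF(\bz'+\bq)=\calF\big((\bz'-\bz)+(\bz+\bq)\big)\le\beta\big(\calF(\bz'-\bz)+\calF(\bz+\bq)\big)=\beta(a+c),\]
using $\calF(\bz'-\bz)=\calF(\bz-\bz')$. This is the instance of the triangle inequality with the three points $\bz'$, $-\bz$ and $-\bq$. Substituting into $a+b>\kappa c$ gives $(1+\beta)a+\beta c>\kappa c$, hence $(\kappa-\beta)c<(\beta+1)a$. Because $\kappa>2\beta+1>\beta$, we may divide by $\kappa-\beta$ and obtain $c\le\frac{\beta+1}{\kappa-\beta}\,a$, which is exactly $\calF(\bz-\bz'')\le\frac{\beta+1}{\kappa-\beta}\calF(\bz-\bz')$.

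The only delicate point is the approximation convention. If $\calA$ is instead a two-sided $\kappa$-approximation, say $\calF/\kappa\le\calA\le\kappa\calF$, the same chain only gives $a+b>c$ up to the constants. In that case we would rescale $\calA$ by $\kappa$ so that it becomes a one-sided $\kappa^2$-approximation, which is presumably why the threshold is $\kappa^2$ rather than $\kappa$, and then redo the arithmetic with the adjusted constant. The triangle-inequality step is unchanged, so with the one-sided normalization adopted above the argument goes through as written.
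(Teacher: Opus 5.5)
Your proof is correct and follows the paper's argument step for step. Both use the one-sided normalization $\calF\le\calA\le\kappa\calF$ to turn $Z>\kappa^2\calA(\bz+\bq)$ into $a+b\ge\kappa c$, and then the $\beta$-approximate triangle inequality with symmetry to get $(\beta+1)a\ge(\kappa-\beta)c$. One trivial slip: your bound $b\le\beta(a+c)$ comes from taking the points $\bz'$, $\bz$, $-\bq$ in the definition, with middle point $\bz$ rather than $-\bz$, but the inequality you displayed is the right one.
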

\begin{proof}
Conditioned on the correctness of the subroutines, we have with high probability:
\begin{enumerate}
\item
$Z\in\left[\calF(\bz-\bz')+\calF(\bz'+\bq),\kappa\cdot(\calF(\bz-\bz')+\calF(\bz'+\bq))\right]$
\item
$\calA(\bz+\bq)\in\left(\calF(\bz+\bq),\kappa\cdot\calF(\bz+\bq)\right]$.
\end{enumerate}
Thus if $Z>\kappa^2\cdot\calA(\bz+\bq)$, then 
\[\kappa\cdot(\calF(\bz-\bz')+\calF(\bz'+\bq))\ge Z>\kappa^2\cdot\calA(\bz+\bq)\ge\kappa^2\cdot\calF(\bz+\bq),\]
so that
\[\calF(\bz-\bz')+\calF(\bz'+\bq)\ge\kappa\cdot\calF(\bz+\bq).\]
By the $\beta$-approximate triangle inequality property of $\calF$, we have $\beta\cdot(\calF(\bz+\bq)+\calF(\bz'-\bz))\ge\calF(\bz'+\bq)$. 
Since $\calF$ is symmetric, then $\calF(\bz'-\bz)=\calF(\bz-\bz')$, so that
\[(\beta+1)\cdot\calF(\bz-\bz')\ge(\kappa-\beta)\cdot\calF(\bz+\bq).\]
Therefore, we have $\calF(\bz-(-\bq))\le\frac{\beta+1}{\kappa-\beta}\cdot\calF(\bz-\bz')$, which gives our notion of progress by updating the iterate from $\bz'$ to $\bz''=-\bq$. 
\end{proof}

Next, we show that when the estimator is inaccurate on a query, we have a notion of progress, in the sense that $\calF(\bz-\bz'')$ is a constant factor smaller than $\calF(\bz-\bz')$, where $\bz'$ and $\bz''$ are the iterates maintained before and after the query. Note that since $\mathcal{F}$ satisfies a $\beta$-approximate triangle inequality, we have $\mathcal{F}(\bz - \bz') + \mathcal{F}(\bz' + \bq) \ge \frac{1}{\beta} \cdot \mathcal{F}(\bz + \bq)$. As a result, we conclude that the number of iterations that the estimate $\bv$ to $\bz$ can change is at most $\O{\log n}$ times. 

\begin{lemma}[Bounded iterations]
\label{lem:bounded:iterations:tri}
Suppose $\calF$ is a function that, on a stream of length $m=\poly(n)$, has value $\left[\frac{1}{\poly(m)},\poly(m)\right]$. 
Let $L_i$ be the number of times that the iterate $\bz'$ is updated at level $i$. 
Then with high probability, $L_i\le\O{\log n}$ for all $i\in[H]$ and for all times in the stream. 
\end{lemma}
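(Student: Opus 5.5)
We will use $\Phi=\calF(\bz-\bz')$ as a potential, where $\bz$ is the (fixed) frequency vector of the earlier blocks that the learner at level $i$ is trying to approximate and $\bz'$ is the current iterate. The plan is to show that $\Phi$ drops by a constant factor on every update and is bounded above and below by polynomials in $m$. Then the number of updates is logarithmic in $m=\poly(n)$, i.e.\ $\O{\log n}$.

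Fix a level $i$ and a block. We condition on the correctness of all sketches at this level and below, which holds with high probability by the same bounded computation paths argument used in the proof of \Cref{thm:ltwo}. An update of $\bz'$ happens only when the estimator is flagged, i.e.\ $Z>\kappa^2\cdot\calA(\bz+\bq)$. In that case \Cref{lem:precond:tri} applies, and the new iterate $\bz''=-\bq$ satisfies
\[\calF(\bz-\bz'')\le\frac{\beta+1}{\kappa-\beta}\cdot\calF(\bz-\bz').\]
Since $\kappa>2\beta+1$, we have $\kappa-\beta>\beta+1$. So the ratio $\rho:=\frac{\beta+1}{\kappa-\beta}$ is a constant strictly less than $1$, and each update multiplies $\Phi$ by at most $\rho$. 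One subtlety: the recursive $Q$-estimate has slightly worse approximation, $\kappa^{\O{i}}$ rather than $\kappa$. I will handle this by running the same argument with the level-dependent threshold, so the contraction ratio remains a fixed constant below $1$.

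For the range, initially $\bz'=\mathbf{0}^n$, so $\Phi_0=\calF(\bz)\le\poly(m)$ by the value assumption. After any update the iterate is $-\bq$ for some stream vector $\bq$, so every iterate is a genuine difference of prefix vectors. Hence $\bz-\bz'$ is itself a vector realizable on a stream of length $\poly(m)$, and $\calF(\bz-\bz')\ge\frac{1}{\poly(m)}$ whenever it is nonzero. If $\bz'=\bz$ exactly, then $\calF(\bz-\bz')$ is zero or minimal, and the estimator can no longer be flagged. This is because the estimate $\calA(\bz-\bz')+\calA(\bz'+\bq)$ is then within a $\kappa$ factor of $\calA(\bz+\bq)$ up to the additive minimal-value term; I would absorb that term into the lower value bound.

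Therefore $L_i\le\log_{1/\rho}\frac{\poly(m)}{1/\poly(m)}=\O{\log m}=\O{\log n}$. A union bound over all levels, blocks, and times, with polynomially many events each failing with probability $1/\poly(n)$, gives the high-probability statement.

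I expect the main obstacle to be the lower bound on $\Phi$. It requires arguing that $\bz-\bz'$ always lies in the class of vectors on which the $[\frac{1}{\poly(m)},\poly(m)]$ value guarantee applies, and handling the degenerate case $\bz-\bz'=\mathbf{0}$ cleanly. The first thing I would try is to note that $\bz+\bq$ is a prefix-sum vector of the stream, so its value is at least $\frac{1}{\poly(m)}$, and then use $\calF(\bz-\bz'')=\calF(\bz+\bq)$ to bound the potential from below directly after each update.
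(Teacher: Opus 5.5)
Your proposal is correct and takes essentially the same route as the paper. The potential $\calF(\bz-\bz')$ shrinks by the constant factor $\frac{\beta+1}{\kappa-\beta}<1$ at each update by \Cref{lem:precond:tri}, the $\poly(m)$ upper and $\frac{1}{\poly(m)}$ lower value bounds then cap the number of updates at $\O{\log n}$, and a union bound finishes. Your handling of the level-dependent threshold $\kappa^{3i}$ and of the degenerate case $\bz'=\bz$ only fills in details the paper leaves implicit.
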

\begin{proof}
For any fixed level $i \in [H]$ and for any fixed time in the stream, let $\calP_i$ be the active blocks at level $i$. 
Let $\bz$ denote the frequency vector corresponding to all the updates prior to $\calP_i$ and let $\bz'$ be the previous iterate maintained by $\EstLevel$. 
Let $\bq$ be the active part of the query in $\calP_i$. 

Conditioned on the correctness of the subroutines in \Cref{alg:est:level:tri}, if the iterate $\bv$ is not updated, we have $P_i + Q_i \le \kappa^{3i} \cdot A_i$, where $P_i$ and $Q_i$ are estimates of $\calF(\bz-\bz')$ and $\calF(\bz'+\bq)$, respectively, and $A_i$ is a $\kappa^i$-approximation to $\calF(\bz + \bq)$.
If $\bv$ is updated, then the precondition of \Cref{lem:precond:tri} so that the algorithm replaces $\bz'$ by $\bz'' = -\bq$. 
By \Cref{lem:precond:tri}, we obtain a notion of progress:
\[\calF(\bz-\bz'')=\calF(\bz+\bq)\le\frac{\beta+1}{\kappa-\beta}\cdot\calF(\bz-\bz').\]
Thus, each update of $\bz'$ guarantees a multiplicative decrease in $\calF(\bz-\bz')$ by a factor of at least $1-\Omega(1)$, for $\kappa>2\beta+1$. 
Since initially $\calF(\bz) \le \poly(n)$ and the minimum possible value of $\calF$ is at least $\frac{1}{\poly(n)}$, it follows that the iterate $\bv$ can be updated at most $L_i\le\O{\log n}$ times at each level $i$. 
Applying a union bound over all $H$ levels gives the desired high-probability bound.
\end{proof}

\begin{algorithm}[H]
\caption{$\EstLevel(i,\bB_i\bz)$ for function $\calF$ with $\beta$-approximate triangle inequality}
\label{alg:est:level:tri}
\begin{algorithmic}[1]
\State{Let $\calP_i$ be the active blocks at level $i$}
\State{Let $\bq_1$ be the active query in $\calP_i$}
\State{Let $\bq_0$ be the part of the query in the left siblings of $\calP_i$}
\State{$\bq\gets\bq_0+\bq_1$}
\State{Let $\bB_i$ be the sketch matrix for $\calP_i$}
\State{}\Comment{$\kappa$-approximation for function $\calF$, robust to $\tO{n^{1/C}}$ adaptive queries}
\State{Let $A_i$ be a $\kappa$-approximation to $\calF(\bz+\bq)$}
\State{Let $\bv$ be the previous iterate}
\State{Let $P_i$ be an estimate of $\calF(\bz+\bq_0-\bv)$ using sketch matrix $\bB_i$}
\If{$i\neq 1$}
\State{$Q_i\gets\EstLevel(i-1,\bB_{i-1}(\bv+\bq_1))$}
\Else
\State{Let $Q_i$ be an estimate of $\calF(\bv+\bq_1)$ using $\bB_1$}
\EndIf
\If{$P_i+Q_i\le\kappa^{3i}\cdot A_i$}
\State{\Return $P_i + Q_i$}
\Else
\State{$\bv\gets-\bq_1$}
\Comment{Performed in the sketch space}
\State{\Return $A_i$}
\EndIf
\end{algorithmic}
\end{algorithm}

\noindent
We now justify the guarantees of our algorithm, which again uses $\EstLevel$ as a building block within the tree structure from \Cref{fig:tree}. 
\thmtri*
\begin{proof}
We first reduce to a deterministic adversary. 
A randomized adversary is simply a distribution over input streams, and the probability that it succeeds is at most the maximum over deterministic sequences. 
Hence, it suffices to consider a deterministic adversary.

We now track the interactions with the sketches maintained by \Cref{alg:est:level:tri}. 
At each time, $\EstLevel$ queries the sketch at level $i$ to determine whether to update the iterate $\bv$. 
The output of a sketch can be viewed as a symbol: $\bot$ if no update occurs, $\top$ if the iterate is updated. 
By \Cref{lem:bounded:iterations:tri}, the number of times $\top$ occurs at level $i$ is at most $L_i = \O{\log n}$ with high probability.  
Let $B$ denote the number of blocks in each level. Each update at level $i$ triggers a query to level $i-1$ but does not reveal additional information about level $i$. 
Moreover, at the beginning of each block, the sketch outputs a $\kappa^i$-approximation encoded in $\O{1}$ words. Since the corrector sketch at level $i$ will output an estimate at most $L_i$ times for each of the $B$ blocks in level $i+1$, the total number of possible output transcripts at level $i$ is at most $\binom{m}{B \cdot L_i} \cdot \O{1}^{B \cdot L_i +B}$.

Now, to achieve constant-factor approximation, we require $\kappa^{\O{H}}=\O{1}$, and thus we require $H=\O{1}$. 
In particular, for any constant $C>1$, there exists $H=\O{1}$ such that the sizes of all blocks at the bottom level are at most $n^{1/C}$. 
Moreover, we remark that each output by the estimator and by the estimate of $P_i$ is at most $\O{\log n}$ bits. 
Thus the number of possible output transcripts at the bottom level is at most $\left(2^{(C_1\log n)}\right)^{n^{1/C}+L}$ for a fixed constant $C_1>0$, while at the other levels, the number of possible transcripts is at most $\binom{m}{B \cdot L}\cdot\left(2^{(C_1\log n)}\right)^{B\cdot L + B}$, since there are $L$ adaptive interactions with the corrector for each of the $B$ blocks in the level below.
Thus if we set $B=\O{n^{1/C}}$, then it suffices to set the failure probability of each sketch so that $\log\frac{1}{\delta}\le\O{n^{1/C}}\cdot\log m$ for some constant $C>1$. 
Conditioned on the correctness of the sketches, the iterate updates always make progress by \Cref{lem:precond:tri}. 
Specifically, each update reduces the function $\calF(\bz-\bv)$ by a constant factor. 
Since $\calF$ takes values in $\left[\frac{1}{\poly(m)},\poly(m)\right]$, this guarantees that each level performs at most $\O{\log n}$ updates, which matches the bound $L_i$.

For space complexity, observe that each sketch requires failure probability $\log\frac{1}{\delta}\le\O{n^{1/C}}\cdot\log m$. 
Since each sketch uses space $S(n)\cdot\log\frac{1}{\delta}$, then each sketch uses space $\tO{n^{1/C}}\cdot S(n)$. 
In fact, despite setting $B=\O{n^{1/C}}$, the total space at a particular level is $\tO{n^{1/C}}\cdot S(n)$ because it suffices to maintain a single active sketch per level at any given time. 
Since there are $H=\O{1}$ levels, the resulting space bound follows. 

Finally, correctness follows from the triangle inequality argument: the iterate updates always ensure that the output is a $\kappa^{\O{C}}$-approximation to $\calF$ at all times. 
Hence, the algorithm is adversarially robust and achieves the claimed space and approximation guarantees.
\end{proof}

\paragraph{Applications.}
We now present several detailed examples of functions that satisfy the approximate triangle inequality. 
The first example is $\mathcal{F}(\bx) = \norm{\bx}_p^p$ for $p \ge 0$.
\begin{lemma}
\lemlab{lem:xp<1}
For any $\bx, \by \in \mathbb{R}^n$ and $p\in[0,1]$, we have $\norm{\bx + \by}_p^p \le \norm{\bx}_p^p + \norm{\by}_p^p$.
\end{lemma}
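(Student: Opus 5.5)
Since $\norm{\bx}_p^p=\sum_{j\in[n]}|x_j|^p$ is a sum over coordinates, it suffices to prove the scalar inequality $|a+b|^p\le|a|^p+|b|^p$ for all $a,b\in\mathbb{R}$ and $p\in[0,1]$, and then sum over $j\in[n]$.

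For $p=0$ we use the convention $|t|^0=\mathbf{1}[t\neq 0]$, so that $\norm{\bx}_0^0$ is the support size. The scalar claim then says that if $a+b\neq 0$, at least one of $a,b$ is nonzero, which is immediate.

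For $p\in(0,1]$, the triangle inequality $|a+b|\le|a|+|b|$ and monotonicity of $t\mapsto t^p$ on $[0,\infty)$ reduce the claim to $(s+t)^p\le s^p+t^p$ for $s,t\ge 0$. If $s+t=0$ this is trivial. Otherwise divide by $(s+t)^p$ and set $\lambda=\frac{s}{s+t}\in[0,1]$. The inequality becomes $\lambda^p+(1-\lambda)^p\ge 1$. This holds because $\lambda^p\ge\lambda$ and $(1-\lambda)^p\ge 1-\lambda$ for exponents $p\le 1$ and bases in $[0,1]$.

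The only delicate point is the $p=0$ convention, i.e., making sure $0^0$ is interpreted as $0$ so that the statement matches $F_0$; the rest is routine. As a consequence, $\calF=\norm{\cdot}_p^p$ is symmetric and satisfies the $1$-approximate triangle inequality, so \Cref{thm:tri} applies with $\beta=1$.
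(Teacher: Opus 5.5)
Your proof is correct and follows the same route as the paper: reduce to the scalar inequality $|a+b|^p\le|a|^p+|b|^p$ and sum over coordinates. The paper only quotes that scalar inequality, whereas you prove it via $\lambda^p+(1-\lambda)^p\ge 1$ and state the $p=0$ convention explicitly; these are the details the paper leaves implicit.
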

\begin{proof}
Recall that for $0\le p \le 1$, for any $x, y \in \mathbb{R}$, we have $|x + y|^p \le |x|^p + |y|^p$. 
Then, we observe that 
\[\norm{\bx + \by}_p^p =\sum_{i = 1}^n |x_i + y_i|^p \le \sum_{i = 1}^n (|x_i|^p + |y_i|^p )\le \norm{\bx}_p^p + \norm{\by}_p^p \;. \qedhere\]
\end{proof}

\begin{lemma}
\lemlab{lem:xp>1}
For any $\bx, \by \in \mathbb{R}^n$ and $p \ge 1$, we have $\norm{\bx + \by}_p^p \le 2^{p - 1}\left(\norm{\bx}_p^p + \norm{\by}_p^p\right)$.
\end{lemma}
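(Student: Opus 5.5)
The plan is to reduce to a scalar inequality and then sum over coordinates, as in the proof of \lemref{lem:xp<1}. Since $\norm{\bx+\by}_p^p=\sum_{i=1}^n |x_i+y_i|^p$ and the right-hand side $2^{p-1}\left(\norm{\bx}_p^p+\norm{\by}_p^p\right)=\sum_{i=1}^n 2^{p-1}\left(|x_i|^p+|y_i|^p\right)$ also splits coordinatewise, it suffices to prove that for all real $x,y$ and $p\ge 1$,
\[|x+y|^p\le 2^{p-1}\left(|x|^p+|y|^p\right).\]
Summing this over $i\in[n]$ then gives the claim directly.

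For the scalar inequality, I will first use the ordinary triangle inequality $|x+y|\le |x|+|y|$. Since $t\mapsto t^p$ is nondecreasing on $[0,\infty)$, this gives $|x+y|^p\le(|x|+|y|)^p$.

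Next I will apply convexity of $\phi(t)=t^p$ on $[0,\infty)$, which holds for $p\ge 1$, at the midpoint of $a=|x|$ and $b=|y|$:
\[\left(\frac{a+b}{2}\right)^p\le\frac{a^p+b^p}{2}.\]
Multiplying both sides by $2^p$ yields $(a+b)^p\le 2^{p-1}(a^p+b^p)$. Chaining this with the previous step gives the scalar inequality.

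No step here is a genuine obstacle. The only point needing care is that convexity of $t^p$ requires $p\ge1$, which is exactly the hypothesis; for $p<1$ the function is concave, which is why \lemref{lem:xp<1} holds with constant $1$ instead. As a consequence, $\calF(\bx)=\norm{\bx}_p^p$ satisfies the $\beta$-approximate triangle inequality with $\beta=2^{p-1}$. This follows by substituting $\bx-\by$ and $\by-\bz$ for the two vectors, and symmetry of $\calF$ is immediate. That is what is needed to invoke \Cref{thm:tri}.
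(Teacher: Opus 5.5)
Your proof is correct and follows the paper's argument: reduce to the scalar inequality $|x+y|^p\le 2^{p-1}(|x|^p+|y|^p)$ and sum over coordinates. The only difference is that you justify the scalar inequality via the triangle inequality and midpoint convexity of $t^p$, while the paper simply recalls it as a known fact.
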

\begin{proof}
Recall that for $p \ge 1$, for any $x, y \in \mathbb{R}$, we have $|x + y|^p \le 2^{p - 1} (|x|^p + |y|^p)$. Then, we observe that 
\[\norm{\bx + \by}_p^p =\sum_{i = 1}^n |x_i + y_i|^p \le \sum_{i = 1}^n 2^{p - 1} (|x_i|^p + |y_i|^p )\le 2^{p - 1} (\norm{\bx}_p^p + \norm{\by}_p^p) \;. \qedhere\]
\end{proof}

We next consider the function $\mathcal{F}$ that has the form $\mathcal{F}(\bx) = \sum_{i = 1}^n g(x_i)$ for some function $g$ such that $g(t) = f(t^2)$ for some Bernstein function $f$. 
Specifically, the following are several examples of such a function $g$, which are common in robust statistics.

\begin{itemize}
\item 
(Pseudo–Huber Loss): $g_{\tau}(x) =  \tau(\sqrt{1 + (x / \tau)^2} - 1)$ 
\item 
(Cauchy/Lorentzian Loss): $g_{\tau}(x) = \log (1 + x^2 / \tau)$
\item 
(Generalized Charbonnier): $g_{\tau} = (1 + x^2 / \tau)^\beta - 1$ for $0 < \beta \le 1$.
\item 
(Welsch/Leclerc Loss): $g_{\tau}(x) = 1 - e^{-x^2 / \tau}$ 
\item 
(Geman–McClure loss):  $ \displaystyle g_{\tau} =\frac{x^2}{x^2 + \tau}$
\end{itemize}

\begin{definition}[Bernstein function, e.g., \cite{schilling2012bernstein}]
\deflab{def:bernstein-function}
A function $f:(0,\infty)\to[0,\infty)$ is called a \emph{Bernstein function} if $f\in C^\infty(0,\infty)$ and its derivative $f'$ is completely monotone; that is, for all $n\in\mathbb{N}_0$ and all $x>0$,
\[(-1)^n f^{(n+1)}(x) \ge 0.\]
\end{definition}
Let $f$ be any Bernstein function with $f(0) = 0$. We now show that any function $g$ defined by $g(t) = f(t^2)$ satisfies the approximate triangle inequality with $\beta = 2$.
As in the proof of \lemref{lem:xp<1}, it suffices for us to consider the one-dimensional case.
\begin{lemma}
Suppose that the function $g(t) = f(t^2)$ for some Bernstein function $f$ and $g(0) = 0$. 
Then for any $a, b \in \mathbb{R}$, we have $g(a + b) \le 2((g(a) + g(b))$.
\end{lemma}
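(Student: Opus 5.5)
Two properties of Bernstein functions with $f(0)=0$ carry the whole argument: monotonicity and subadditivity on $[0,\infty)$.

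\emph{Monotonicity.} Taking $n=0$ in the definition gives $f'\ge 0$, so $f$ is nondecreasing. Since $f(0)=0$ (understood as the limit at $0^+$), $f$ is nonnegative.

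\emph{Subadditivity.} Taking $n=1$ gives $f''\le 0$, so $f$ is concave on $(0,\infty)$. A concave function with $f(0)=0$ satisfies $f(\lambda s)\ge \lambda f(s)$ for $\lambda\in[0,1]$. For $s,t>0$, apply this with $\lambda = s/(s+t)$ and $\lambda = t/(s+t)$ at the point $s+t$ and add the two inequalities:
\[f(s)+f(t)\ge \tfrac{s}{s+t}f(s+t)+\tfrac{t}{s+t}f(s+t)=f(s+t).\]
So $f(s+t)\le f(s)+f(t)$ for all $s,t\ge 0$; the boundary cases follow from $f(0)=0$ and continuity.

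Now fix $a,b\in\mathbb{R}$. The elementary inequality $(a+b)^2\le 2a^2+2b^2$ holds, equivalent to $(a-b)^2\ge 0$. Monotonicity of $f$ then gives $g(a+b)=f((a+b)^2)\le f(2a^2+2b^2)$. By subadditivity,
\[f(2a^2+2b^2)\le f(2a^2)+f(2b^2),\]
and subadditivity applied with $s=t$ gives $f(2s)\le 2f(s)$. Combining these,
\[g(a+b)\le 2f(a^2)+2f(b^2)=2\bigl(g(a)+g(b)\bigr).\]

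The only point needing care is the boundary at $0$. The definition places $f$ on $(0,\infty)$, so concavity there must be extended to $[0,\infty)$ using $f(0)=0$ and continuity at $0^+$; Bernstein functions have a finite limit at $0^+$, and the hypothesis $g(0)=0$ fixes that limit to be $0$.

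The $n$-dimensional version for $\calF(\bx)=\sum_i g(x_i)$ then follows by summing the one-dimensional inequality coordinatewise, exactly as in the proof of \lemref{lem:xp<1}.
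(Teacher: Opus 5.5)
Your proof is correct and follows the paper's argument exactly. Both use monotonicity and concavity of $f$, obtain subadditivity (and hence $f(2x)\le 2f(x)$) from concavity together with $f(0)=0$, and then chain these with $(a+b)^2\le 2a^2+2b^2$. You additionally spell out the derivation of subadditivity from concavity and the extension to the endpoint $0$, which the paper leaves implicit.
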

\begin{proof}
Since $f$ is a Bernstein function, we have $f'(x) \ge 0$ and $f''(x) \le 0$ for all $x>0$, so $f$ is increasing and concave. 
Also, $g(0)=0$ implies $f(0)=0$.  

Concavity and $f(0)=0$ imply that $f$ is subadditive on $[0,\infty)$, i.e., $f(x+y) \le f(x) + f(y)$ for all $x,y \ge 0$, and that $f(2x) \le 2 f(x)$ for all $x \ge 0$.  

Now, for any $a,b \in \mathbb{R}$, we have $(a+b)^2 \le 2 a^2 + 2 b^2$, so
\[g(a+b) = f((a+b)^2) \le f(2 a^2 + 2 b^2) \le f(2 a^2) + f(2 b^2) \le 2(f(a^2) + f(b^2)) = 2(g(a) + g(b)).\]
\end{proof}

Next, we recall several classical results from the streaming literature. 
Note that in order to apply our framework, we require a non-robust sketch for the chosen function $g$. 
The following lemma from \cite{BCWY16} describes a sufficient condition under which the function $g$ admits a sublinear-space (non-robust) streaming algorithm. 

\begin{lemma} [Zero-one law for normal functions, \cite{BCWY16}] 
\lemlab{lem:zero-one-law}
Let $g: \mathbb{Z}_{n \ge 0} \to \mathbb{R}$ be a slow-jumping, slow-dropping, and predictable function. 
There exists a one-pass turnstile streaming algorithm that provides a $(1 \pm \eps)$-approximation to the value of $\|\bx\|_g = \sum_{i = 1}^n g(x_i)$ in sub-polynomial space, where 
\begin{enumerate}
\item 
A function $g\in \mathcal{G}$ is \emph{slow-dropping} if and only if there exists a sub-polynomial function $h$ such that for $y \in \mathbb{N}$ and $x < y$, we have $g(x) \le g(y) h(y)$.
\item 
A function $g\in\mathcal{G}$ is slow-jumping if and only if there exists a sub-polynomial function $h$ such that for any $x<y$ we have $g(y)\le \lfloor y/x\rfloor^{2}\, h(\lfloor y/x\rfloor\, x)\, g(x)$. 
\item
$g \in \mathcal{G}$ is predictable if and only if for every sub-polynomial $\eps>0$ there exists a sub-polynomial function $h$ such that for all $x\in\mathbb{N}$ and $y\in[1,\,x/h(x))$, either $g|(x + y) - g(x)| \le \eps(x) g(x)$ or $g(y)\ge g(x)/h(x)$.
\end{enumerate}
In particular, if the function $h$ is polylogarithmic, the space used by the streaming algorithm will also be polylogarithmic.
\end{lemma}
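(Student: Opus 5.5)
The plan is to follow the standard reduction of \cite{BCWY16}: estimating the sum $\|\bx\|_g=\sum_i g(x_i)$ is reduced to finding the ``$g$-heavy hitters'' of a sequence of subsampled vectors, and those heavy hitters are found with an $L_2$ heavy-hitter linear sketch such as CountSketch. Concretely, I would use the recursive sketch framework of Braverman and Ostrovsky. Draw nested random subsets $[n]=S_0\supseteq S_1\supseteq\cdots\supseteq S_{\log n}$, where each coordinate survives to the next level independently with probability $1/2$. Let $\bx^{(j)}$ denote $\bx$ restricted to $S_j$. Suppose that at each level we can output a set $H_j$ of coordinates, with $(1\pm\eps)$-accurate values $g(x_i)$, containing every $i$ with $g(x_i)\ge \gamma\cdot\|\bx^{(j)}\|_g$ for some sub-polynomial $\gamma$. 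Then the Braverman--Ostrovsky recursion
\[
Y_j=2Y_{j+1}+\sum_{i\in H_j}(1-2\cdot\mathbf{1}[i\in S_{j+1}])\,g(x_i)
\]
gives an unbiased estimator $Y_0$. Its variance is controlled by the heavy-hitter threshold, so repeating $\poly(1/\eps,1/\gamma)$ times yields a $(1\pm\eps)$-approximation. All of these components are linear sketches, so the algorithm works in turnstile streams.

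The core step is to show that a $g$-heavy coordinate is also $L_2$-heavy up to a sub-polynomial factor, so that CountSketch with $\poly(1/\gamma)\cdot h$ buckets finds it. Suppose $g(x_i)\ge\gamma\sum_k g(x_k)$. I would bound $\sum_{k\ne i}x_k^2$ against $x_i^2$ by splitting the other coordinates into two groups.
\begin{itemize}
\item[(a)] Coordinates with $|x_k|\le |x_i|$. Slow-jumping applied with $x=|x_k|$ and $y=|x_i|$ gives $g(x_k)\ge (x_k/x_i)^2\,g(x_i)/h$. Hence each such coordinate contributes a fraction of $g(x_i)$ comparable, up to $h$, to its $L_2$ fraction $x_k^2/x_i^2$. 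Summing and using $g$-heaviness bounds $\sum x_k^2\le \poly(1/\gamma,h)\cdot x_i^2$ over this group.
\item[(b)] Coordinates with $|x_k|>|x_i|$. Slow-dropping gives $g(x_k)\ge g(x_i)/h$, so there are at most $h/\gamma$ such coordinates. Each of them is itself $g$-heavy, so I would handle them by induction on magnitude, or by noting that the subsampling levels separate them.
\end{itemize}
Together these make $x_i$ an $L_2$ heavy hitter relative to its level with only sub-polynomial loss.

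Next, CountSketch returns only an estimate $\hat x_i=x_i\pm y$, where $y$ is at most the residual $L_2$ mass divided by the number of buckets. This is where predictability enters. Either $|g(x_i+y)-g(x_i)|\le\eps\,g(x_i)$, and the estimated value is accurate, or $g(y)\ge g(x_i)/h$. In the second case, slow-jumping applied to the residual shows that $y$ itself carries too much $g$-mass, which contradicts the choice of bucket count (enlarged by a sub-polynomial factor). Summing over the $\O{\log n}$ levels and the repetitions, the space is $\poly(1/\eps)$ times a sub-polynomial factor in $h$. If $h$ is polylogarithmic, the space is polylogarithmic.

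The main obstacle is the $g$-heavy $\Rightarrow$ $L_2$-heavy comparison in the second paragraph, specifically case (b) together with the fact that $g$ may be non-monotone. The coordinates with larger magnitude but comparable $g$-value must be handled without losing more than a sub-polynomial factor. I would first try peeling them off: since there are at most $h/\gamma$ of them, CountSketch with $\poly(h/\gamma)$ buckets isolates them, and heaviness can then be argued for the residual vector.
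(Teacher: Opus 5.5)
The paper does not prove this lemma; it only quotes it from \cite{BCWY16}. Your architecture (the Braverman--Ostrovsky recursive sketch fed by a CountSketch-based $g$-heavy-hitter routine, with slow-jumping/slow-dropping transferring heaviness and predictability certifying the values) is the standard route to this kind of zero-one law. However, the two steps that carry the content are not yet arguments.

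First, in case (b) the claim ``$g$-heavy implies $L_2$-heavy up to a sub-polynomial factor'' is false as stated. $g\equiv 1$ on the positive integers satisfies all three conditions with $h\equiv 1$, yet for $\bx=(1,n,0,\ldots,0)$ the first coordinate carries half the $g$-mass and less than an $n^{-2}$ fraction of $\|\bx\|_2^2$. Subsampling cannot repair this. A coordinate that is $\gamma$-heavy for $\bx^{(j)}$ must be reported by the level-$j$ sketch, where the larger coordinates are equally present (at level $0$ nothing has been subsampled). ``Induction on magnitude'' is never specified. Your last suggestion is the one that works, and it needs no explicit peeling. Slow-dropping leaves at most $h/\gamma$ coordinates larger than $|x_i|$, so for $b\ge h/\gamma$ case (a) gives $\|\bx_{\mathrm{tail}(b)}\|_2^2\le (h/\gamma)\,x_i^2$. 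The standard CountSketch bound does the rest: with $\Theta(b)$ buckets per row, $|\hat x_i-x_i|=\O{\|\bx_{\mathrm{tail}(b)}\|_2/\sqrt{b}}$ for all $i$.

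Second, in the predictability step slow-jumping alone yields no contradiction. It lower-bounds $g(x_k)$ only for $|x_k|\le|y|$, whereas the mass colliding with $i$ may sit in a few coordinates with $|y|<|x_k|\le|x_i|$. You need slow-dropping here too. If $g(y)\ge g(x_i)/h$, every coordinate with $|x_k|>|y|$ has $g(x_k)\ge g(x_i)/h^2$, so there are at most $h^2/\gamma$ of them. With $b$ a large constant times $h^2/\gamma$, every entry of $\bx_{\mathrm{tail}(b)}$ then has magnitude at most $|y|$. Combining $y^2\le\O{\|\bx_{\mathrm{tail}(b)}\|_2^2/b}$ with slow-jumping gives $\|\bx\|_g=\Omega(b\,g(y)/h)=\Omega(b\,g(x_i)/h^2)>g(x_i)/\gamma$, a contradiction. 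You must also check the precondition $|y|<|x_i|/h$ (take $b\ge Ch^3/\gamma$) and handle negative errors by applying predictability at the base point $|x_i|-|y|$.

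Finally, CountSketch will also report $L_2$-heavy coordinates that are not $g$-heavy, and for these the value can genuinely be wrong. For example, take $g$ equal to $1$ on odd and $2$ on even positive integers, which satisfies all three conditions; an odd error changes the value by a factor of $2$. Your cover, however, demands accurate values for everything in $H_j$. You need a filter. For instance, keep $i$ only if $g$ stays within a $(1\pm\O{\eps})$ factor of $g(\hat x_i)$ on the whole window $|\hat x_i|\pm e$, where $e=\Theta(\|\bx_{\mathrm{tail}(b)}\|_2/\sqrt{b})$ dominates the CountSketch error. The argument above shows every $g$-heavy $i$ survives, and any survivor is accurate because $|x_i|$ lies in its window. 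Separately, the recursion's error accumulates over $\log n$ levels, so the threshold or the number of repetitions must also depend on $\log n$, not only on $\eps$ and $\gamma$; this is harmless for the space bound.
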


In the following lemma, we show that for any Bernstein function $f$ and $g(t) = f(t^2)$ where $g(0) = 0$, $g$ satisfies the slow-dropping, slow-jumping, and predictable properties of \lemref{lem:zero-one-law} for some choice of the function $h$.

\begin{lemma}
\lemlab{lem:zero-one-law-h}
Suppose that the function $f:(0,\infty)\to[0,\infty)$ is a Bernstein function and $g(x) = f(x^2)$. 
Then
\begin{enumerate}
\item 
$g$ is slow-dropping with $h(x) = 1$.
\item 
$g$ is slow-jumping with $h(x) = 4$.
\item 
$g$ is predictable with $h(x) = \lceil 3/\eps(x) \rceil$.
\end{enumerate}
\end{lemma}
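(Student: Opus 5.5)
The plan is to reduce all three properties to a few elementary facts about $f$ that follow from the Bernstein property. First, $f'\ge 0$, so $f$ is nondecreasing. Second, $f''\le 0$, so $f$ is concave. Third, $f(0)=0$, understood as the limit $f(0^+)$, which is consistent with $g(0)=0$ as in the preceding lemma. From concavity and $f(0)=0$ I will derive two scaling facts, for all $u\ge 0$ and $c\ge 1$:
\[f(cu)\le c\,f(u) \qquad\text{and}\qquad u\,f'(u)\le f(u).\]
The first holds because $f(u)=f\bigl(\tfrac1c\cdot cu+(1-\tfrac1c)\cdot 0\bigr)\ge \tfrac1c f(cu)$. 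The second holds because the tangent line at $u$ lies above the graph, so evaluating it at $0$ gives $0=f(0)\le f(u)-u f'(u)$. Since $g(x)=f(x^2)$ and $x\ge 0$ on the domain of \lemref{lem:zero-one-law}, every comparison between values of $g$ becomes a comparison between the squared arguments inside $f$.

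\emph{Slow-dropping.} For $0\le x<y$ we have $x^2<y^2$, so monotonicity of $f$ gives $g(x)\le g(y)=g(y)\cdot 1$. This is the claim with $h\equiv 1$.

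\emph{Slow-jumping.} Let $0<x<y$ and $k=\lfloor y/x\rfloor\ge 1$. Then $y<(k+1)x\le 2kx$, so $y^2\le 4k^2x^2$. Monotonicity and the first scaling fact with $c=4k^2\ge 1$ give
\[g(y)\le f(4k^2x^2)\le 4k^2 f(x^2)=\lfloor y/x\rfloor^2\cdot 4\cdot g(x),\]
which is the claim with $h\equiv 4$.

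\emph{Predictable.} Set $h(x)=\lceil 3/\eps(x)\rceil$. I will show that the first alternative always holds, namely $|g(x+y)-g(x)|\le \eps(x) g(x)$, and I may assume $\eps:=\eps(x)\le 1$. For $1\le y<x/h(x)$ we get $y\le \eps x/3$, hence
\[(x+y)^2\le x^2(1+\eps/3)^2\le (1+\eps)x^2,\]
where the last step uses $\eps^2/9\le \eps/3$. Write $u=x^2$. Monotonicity gives $0\le g(x+y)-g(x)\le f((1+\eps)u)-f(u)$. Concavity bounds this increment by $\eps u\,f'(u)$, and the second scaling fact bounds that by $\eps f(u)=\eps\, g(x)$.

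The main obstacle is this last step. The naive route of subadditivity, $f(u+\eps u)\le f(u)+f(\eps u)$, fails, because concavity gives $f(\eps u)\ge \eps f(u)$, which is the wrong direction. The fix is to use the tangent-line bound $uf'(u)\le f(u)$ instead. I will also check the edge cases: $x=0$ is excluded in slow-jumping by the requirement $x<y$ with $\lfloor y/x\rfloor$ defined, and $\eps(x)>1$ only weakens the required bound.
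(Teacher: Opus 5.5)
Your proof is correct and follows essentially the same route as the paper: monotonicity of $f$ for slow-dropping, the concavity scaling bound $f(cu)\le c\,f(u)$ (the paper's ``$f(t)/t$ is nonincreasing'') for slow-jumping, and the tangent-line bound $uf'(u)\le f(u)$ applied to the increment for predictability. Your slow-jumping step is more careful than the paper's, which writes $(y/x)^2\le 4$ where the needed and true bound is your $(y/x)^2\le 4\lfloor y/x\rfloor^2$. Also, you only ever use $f(0^+)\ge 0$, which holds automatically because $f\ge 0$, so importing $f(0)=0$ from the preceding lemma is unnecessary.
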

\begin{proof}
(1) Since $f$ is a Bernstein function, we have $f' \ge 0$, which implies that $f$ is non-decreasing. 
Therefore $g(x) = f(x^2)$ is also non-decreasing on positive input, and for $x<y$ we have $g(x) \le g(y)$, showing that $g$ is slow-dropping with $h(x) = 1$.

(2) Since $f$ is a Bernstein function, its derivative $f'$ exists and is non-increasing. 
Define
\[s(t) := \frac{f(t)-f(0)}{t} = \frac{1}{t} \int_0^t f'(u)\,du, \qquad t>0.\]
Because $f'$ is non-increasing, $s(t)$ is non-increasing, so for $0<x<y$ we have $s(y) \le s(x)$. 
Therefore,
\[\frac{f(y)}{y} = \frac{f(0)}{y} + s(y) \le \frac{f(0)}{y} + s(x) \le \frac{f(0)}{x} + s(x) = \frac{f(x)}{x}.\]
This shows that $f(y) \le \frac{y}{x} f(x)$ for $y\ge x>0$, and hence
\[g(y) = f(y^2) \le \frac{y^2}{x^2} f(x^2) = \frac{y^2}{x^2} g(x).\]
Thus for the relevant input range, we can upper bound $(y/x)^2 \le 4$, showing that $g$ slow-jumping with $h(x) = 4$.

(3) Since $f$ is concave and increasing with $f(0)\ge 0$, we have
\[f'(t) \le \frac{f(t)}{t} \qquad (t>0).\]
Let $g(x) = f(x^2)$ and write $\Delta := (x+y)^2 - x^2 = 2xy + y^2$. 
Then
\[\frac{|g(x+y)-g(x)|}{g(x)} = \frac{f(x^2+\Delta)-f(x^2)}{f(x^2)}\le \frac{f'(x^2)}{f(x^2)} \Delta\le \frac{\Delta}{x^2} = 2\frac{y}{x} + \left(\frac{y}{x}\right)^2.\]
With $y \le x/h(x) \le (\eps(x)/3)\,x$, we have $y/x \le \eps(x)/3$, hence
\[2\frac{y}{x} + \left(\frac{y}{x}\right)^2 \le \frac{2}{3}\eps(x) + \frac{1}{9}\eps(x)^2 \le \frac{7}{9}\eps(x) < \eps(x),\]
where we used $\eps(x) \le 1$. Therefore
\[\frac{|g(x+y)-g(x)|}{g(x)} \le \eps(x),\]
which shows that $g$ is predictable with $h(x) = \lceil 3/\eps(x) \rceil$.
\end{proof}
Finally, we combine \lemref{lem:zero-one-law} and~\lemref{lem:zero-one-law-h} to conclude that there exists a non-robust turnstile streaming algorithm for functions $g$ satisfying \lemref{lem:zero-one-law-h}.

\begin{lemma}
Given a function $g: \mathbb{Z}_{n \ge 0} \to \mathbb{R}$, where $g(x) = f(x^2 )$ for some Bernstein function $f$, there exists a one-pass (non-robust) turnstile streaming algorithm that with high probability, outputs a $(1 \pm \eps)$-approximation to $\|\bx\|_g = \sum_{i = 1}^n g(x_i)$ using space $\poly\left(\frac{1}{\eps}, \log n\right)$.
\end{lemma}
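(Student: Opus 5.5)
The plan is to obtain this as a direct corollary of the zero-one law of \lemref{lem:zero-one-law}, with the three hypotheses supplied by \lemref{lem:zero-one-law-h}. First I restrict attention to the domain that matters for $\|\bx\|_g$. Because $g(x)=f(x^2)$ is even in $x$, we have $\|\bx\|_g=\sum_i g(|x_i|)$. So it suffices to handle $g$ on $\mathbb{Z}_{\ge 0}$, with $g(0)=0$, which is the setting \lemref{lem:zero-one-law} is phrased in.

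Next I verify that the three properties hold with the functions $h$ produced by \lemref{lem:zero-one-law-h}.
\begin{itemize}
\item Slow-dropping holds with $h\equiv 1$, by monotonicity of $f$.
\item Slow-jumping holds with a constant $h$. In fact the bound $g(y)\le (y/x)^2 g(x)$ derived there is what is needed. Since $y/x<\lfloor y/x\rfloor+1\le 2\lfloor y/x\rfloor$, we get $(y/x)^2\le 4\lfloor y/x\rfloor^2$, so $h\equiv 4$ works for all $x<y$. I will spell this out, since the earlier lemma only alludes to it.
\item Predictability holds with $h(x)=\lceil 3/\eps(x)\rceil$.
\end{itemize}

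The space bound hinges on these $h$ being polylogarithmic. The first two are constant. For predictability, \lemref{lem:zero-one-law} requires handling every sub-polynomial $\eps(\cdot)$. The algorithm of \cite{BCWY16}, however, only invokes predictability with $\eps(x)$ set to the target accuracy, up to $\poly(\log n)$ factors. With that choice $h=\O{\poly(1/\eps,\log n)}$, and the closing clause of \lemref{lem:zero-one-law} then gives $\poly(1/\eps,\log n)$ space.

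The high-probability guarantee follows by taking a median of $\O{\log n}$ independent copies, which costs only another logarithmic factor.

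The main obstacle I expect is this quantitative step. The zero-one law's conclusion is stated only as ``sub-polynomial space'' in general, and as ``polylogarithmic'' when $h$ is polylogarithmic. I need the explicit $\poly(1/\eps)$ dependence, which comes from the predictability parameter $h=\lceil 3/\eps\rceil$ entering the space of the \cite{BCWY16} construction polynomially.

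I would address this by tracing the proof of \lemref{lem:zero-one-law}. There, $h$ enters through heavy-hitter thresholds of order $1/(\eps\, h)$ and through a number of levels that is logarithmic. So polynomial dependence on $h$ and $1/\eps$ translates directly into $\poly(1/\eps,\log n)$ bits.

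A minor secondary point concerns the case where $f$ is only defined on $(0,\infty)$. I would extend it by $f(0)=0$, using continuity (or $g(0)=0$ as assumed), so that the Bernstein properties used earlier remain valid on $[0,\infty)$.
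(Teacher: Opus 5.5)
Your proposal is correct and follows the paper's route: the paper obtains this lemma just by feeding the three properties of \lemref{lem:zero-one-law-h} into the zero-one law \lemref{lem:zero-one-law}, with no further argument. Your additions only make that combination more careful. In particular, $(y/x)^2\le 4\lfloor y/x\rfloor^2$ is the right justification for $h\equiv 4$, whereas the paper's ``$(y/x)^2\le 4$'' fails for large $y/x$. Also, extending $f$ by $f(0)=0$ is justified by monotonicity and concavity rather than continuity, since a Bernstein function can have $f(0^+)>0$.
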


\section*{Acknowledgements}
Elena Gribelyuk and Huacheng Yu are supported in part by an NSF CAREER award CCF-2339942. 
Honghao Lin was supported in part by a Simons Investigator Award and a CMU Paul and James Wang Sercomm Presidential Graduate Fellowship. 
David P. Woodruff is supported in part by Office of Naval Research award number N000142112647 and a Simons Investigator Award. 
Samson Zhou is supported in part by NSF CCF-2335411 and gratefully acknowledges funding provided by the Oak Ridge Associated Universities (ORAU) Ralph E. Powe Junior Faculty Enhancement Award.

\def\shortbib{0}
\bibliographystyle{alpha}
\bibliography{references}
\end{document}